\DeclarePairedDelimiter{\ceil}{\lceil}{\rceil}
\tikzset{->,>=stealth',auto,node distance=2cm,thick,initial text=}
\tikzstyle{accepting}=[path picture={%
\definecolor{green1}{rgb}{0, 0.5, 0}
\definecolor{red1}{rgb}{0.64, 0, 0}
\newcommand{\R}{\mathbb{R}} 
\newcommand{\IfunM}{\Inter_{min}} 
\newcommand{\mx}[1]{max^{#1}} 
\newcommand{\mn}[1]{min^{#1}} 
\newcommand{\Path}[2]{\infl{#1}{#2}} 
\newcommand{\CBfactor}{\beta}
\newcommand{\CBfun}[3]{\CBfactor^{#3}_{\agent{#1},\agent{#2}}} 
\newcommand{\CBfunM}{\CBfactor_{min}}
\newcommand{\Agents}{\mathcal{A}} 
\newcommand{\agent}[1]{#1} 
\newcommand{\Blf}{B} 
\newcommand{\Blft}[1]{\Blf^{#1}} 
\newcommand{\Bfun}[2]{\Blf^{#2}_{\agent{#1}}} 
\newcommand{\qm}[1]{``#1''}
\newcommand{\cali}{\mathcal{I}}
\newcommand{\caly}{\mathcal{Y}}
\newcommand{\calt}{\mathcal{T}}
\newcommand{\Pol}{\rho}
\newcommand{\Pfun}[1]{\Pol(#1)}
\newcommand{\PolER}{\rho_{\mathit{ER}}}
\newcommand{\PfunER}[1]{\PolER(#1)}
\newcommand{\Inter}{\cali} 
\newcommand{\Interclique}{\Inter^{\textit{clique}}} 
\newcommand{\Interdisconnected}{\Inter^{\textit{disc}}} 
\newcommand{\Interunrelenting}{\Inter^{\textit{unrel}}} 
\newcommand{\Intercircular}{\Inter^{\textit{circ}}} 
\newcommand{\Ifun}[2]{\Inter_{#1,#2}} 
\newcommand{\Ifunclique}[2]{\Interclique_{#1,#2}} 
\newcommand{\Ifundisconnected}[2]{\Interdisconnected_{#1,#2}} 
\newcommand{\Ifununrelenting}[2]{\Interunrelenting_{#1,#2}} 
\newcommand{\Ifuncircular}[2]{\Intercircular_{#1,#2}} 
\newcommand{\tmax}{T} 
\newcommand{\Upd}{\mu} 
\newcommand{\UpdR}{\mu^{C}} 
\newcommand{\UpdCB}{\mu^{\textit{CB}}} 
\newcommand{\larrow}[1]{\stackrel{\,\, \small #1\,\,}{\rightarrow}} 
\newcommand{\Larrow}[2]{\stackrel{\,\,#1\,\,}{\leadsto_{#2}}} 
\newcommand{\ldinfl}[3]{#1{\larrow{\begin{tiny}{#2}\end{tiny}}}#3} 
\newcommand{\infl}[2]{#1{\Larrow{}{}}#2} 
\newcommand{\linfl}[4]{#1 \Larrow{#2}{#3} #4} 
\newcommand{\nat}{\mathbb{N}}
\newcommand{\reals}{\mathbb{R}}
\newcommand{\mstar}{\mathtt{m}}
\newcommand{\defsymbol}{\stackrel{\textup{\texttt{def}}}  {=}} 
\def\techrep{} 
 \newcommand{\version}[2]{#2} 
 \newcommand{\version}[2]{#1} 
\begin{document}
\title{A Multi-Agent Model for Polarization under Confirmation Bias in Social Networks
\thanks{M\'ario S. Alvim and Bernardo Amorim were partially supported by CNPq, CAPES and FAPEMIG.
Santiago Quintero and Frank Valencia were partially supported by the ECOS-NORD project FACTS (C19M03).}}

\titlerunning{A Multi-Agent Model for Polarization under Confirmation Bias in Social Networks}
%
\author{M\'{a}rio S. Alvim \inst{1}
\and
Bernardo Amorim\inst{1}
\and
Sophia Knight\inst{2}
\and
Santiago Quintero\inst{3}
\and
Frank Valencia\inst{4,5}
}
\authorrunning{Alvim et al.}
%
\institute{
Department of Computer Science, UFMG, Brazil \and
Department of Computer Science, University of Minnesota Duluth, USA
\and
LIX, \'{E}cole Polytechnique de Paris, France \and
CNRS-LIX, \'{E}cole Polytechnique de Paris, France \and
Pontificia Universidad Javeriana Cali, Colombia
}
\maketitle              
\begin{abstract}
\todo{Mario on 2021-02-13: I put the paper in the LNCS format for FORTE.}
We describe a model for polarization in multi-agent systems based on Esteban and Ray's standard measure of polarization from economics. Agents evolve by updating their beliefs (opinions) based on an underlying influence graph, as in the standard DeGroot model for social learning, but under a \emph{confirmation bias}; i.e., a discounting of opinions of agents with dissimilar views. 
We show that even under this bias polarization eventually vanishes (converges to zero) if the influence graph is strongly-connected. If the influence graph is a regular symmetric circulation, we determine the unique belief value to which all agents converge. Our more insightful result establishes that, under some natural assumptions, if polarization does not eventually vanish then either there is a disconnected subgroup of agents, or some agent influences others more than she is influenced.
We also show that polarization does not necessarily vanish in weakly-connected graphs under confirmation bias.
We illustrate our model with a series of case studies 
and simulations, and show how it relates to the classic DeGroot model 
for social learning.

\keywords{Polarization  \and Confirmation bias \and Multi-Agent Systems \and Social Networks}
\end{abstract}

\section{Introduction}
\label{sec:introduction}


\todo{Mario on 2020-10-19: Add this to intro somewhere? ``A central concern of our model is that the majority of the (quite extensive) literature on the representation of agents' beliefs --as well as of the way these agents interact to update such beliefs-- assumes that agents are rational and optimally use all information at their disposal. But a growing body of work by psychologists, sociologists, and economists has challenged these assumptions, and several common, relevant cognitive biases  have been catalogued~\cite{XXX,YYY,ZZZ}. In particular, we model the behavior of agents prone to \emph{authority bias}, by which one gives more weight to evidence presented by some agents than by others, and to \emph{confirmation bias}, by which one tends to give more weight to evidence supporting their current beliefs than to  evidence contradicting them, independently from whence the evidence is coming.''}

\emph{Distributed systems} have changed substantially in the recent past with the
advent of social networks.  In the
previous incarnation of distributed computing~\cite{Lynch96} the
emphasis was on consistency, fault tolerance, resource management and
related topics; these were all characterized by \emph{interaction between
processes}.  What marks the new era of distributed systems is an emphasis on the flow of epistemic information (facts, beliefs, lies) and its impact on  democracy and on society at large.


Indeed in social networks a group may shape their beliefs by attributing more value to the opinions of outside influential figures. This cognitive bias is known as \emph{authority bias} \cite{Ramos:19:Book}. Furthermore, in a group with uniform views, users may become extreme by reinforcing one another's opinions, giving more value to opinions that confirm their own preexisting beliefs. This is another common cognitive bias known as \emph{confirmation bias}~\cite{Aronson10}. As a result, social networks can cause their users to become radical and isolated in their own ideological circle causing dangerous splits in society~\cite{Bozdag13} in a phenomenon known as \emph{polarization}~\cite{Aronson10}.

There is a growing interest in the development of models for the analysis of polarization and social influence in networks \cite{li,proskurnikov,sirbu,gargiulo,alexis,Guerra,myp,degroot,naive,zoe,fblogic,facebook,hunter}. Since polarization involves non-terminating systems with \emph{multiple agents} simultaneously exchanging information (opinions), concurrency models are a natural choice to capture the dynamics of polarization.  

\todo{Mario to Frank on 2020-10-19: Do we want to refer to ourselves as ``the authors'' instead of ``we''?}
\emph{The Model.} In fact, we developed a multi-agent model for polarization in \cite{Alvim:19:FC}, inspired by linear-time models of concurrency where the state of the system evolves in discrete time units (in particular \cite{tcc,ntcc}). In each time unit, the agents \emph{update} their beliefs  about the proposition of interest taking into account the beliefs of their neighbors in an underlying weighted \emph{influence graph}. The belief update gives more value to the opinion of agents with higher influence (\emph{authority bias}) and to the opinion of agents with similar views (\emph{confirmation bias}).  Furthermore, the model is equipped with  a \emph{polarization measure} based on  the seminal work in economics by Esteban and Ray~\cite{Esteban:94:Econometrica}.  The polarization is measured at each time unit and it is $0$ if all agents' beliefs fall within an interval of agreement about the proposition. The contributions in \cite{Alvim:19:FC} were of an experimental nature and aimed at exploring how the combination of influence graphs and cognitive biases in our model can lead to polarization.

In the current paper we prove claims made from experimental observations in \cite{Alvim:19:FC} using techniques from calculus, graph theory, and flow networks. The main goal of this paper is identifying how networks and beliefs are structured, for agents subject to confirmation bias, when polarization \emph{does not} disappear. Our results provide insight into the phenomenon of polarization, and are a step toward the design of robust computational models and simulation software for human cognitive and social processes. 

The closest related work is that on DeGroot models \cite{degroot}. These are the standard linear models for social learning whose analysis can be carried out by linear techniques from Markov chains. A novelty in our model is that its update function extends the classical update  from DeGroot models with confirmation bias. As we shall elaborate in Section~\ref{sec:degroot} the extension makes the model no longer linear and thus mathematical tools like Markov chains do not seem applicable. Our model incorporates a polarization measure in a model for social learning and extends classical convergence results of DeGroot models to the confirmation bias case.

\emph{Main Contributions.}  The following are the main theoretical results established in this paper. Assuming confirmation bias and some natural conditions about belief values: (1)  {If polarization does not disappear then either there is disconnected subgroup of agents, or some agent influences others more than she is influenced, or all the agents are initially radicalized} (i.e., each individual holds the most extreme value either in favor or against of a given proposition). (2) {Polarization eventually disappears  (converges to zero) if the influence graph is strongly-connected}. (3) {If the influence graph is a regular symmetric circulation we determine the unique belief value all agents converge to.}


\emph{Organization.} In Section \ref{sec:model} we introduce the model and  illustrate a series of examples and simulations, uncovering interesting new insights and complex characteristics of the believe evolution. The theoretical contributions (1-3) above are given in Sections \ref{sec:general-result} and \ref{sec:specific-cases}. We discuss DeGroot and other related work in Sections \ref{sec:degroot}  and \ref{sec:conclusion}. 
\version{Full proofs can be found in the corresponding technical report~\cite{Alvim:21:ForteTechRep}.}{Full proofs are in the Appendix.}
An implementation of the model in Python and the simulations are available on Github~\cite{website:github-repo}.


\section{The Model}
\label{sec:model}
Here we refine the polarization model introduced in \cite{Alvim:19:FC}, composed of {static} and {dynamic} elements.
We presuppose basic knowledge of calculus and graph theory \cite{Sohrab:14,Diestel:17}.
\subsubsection{Static Elements of the Model}
\label{sec:model-static}

\emph{Static elements} of the model represent a snapshot of a social network 
at a given point in time. They
include the following components:

\begin{itemize}
 \item A (finite) set $\Agents = \{\agent{0}, \agent{1}, \ldots, \agent{n{-}1} \}$ of $n  \geq 1$ \emph{agents}.
 
 \item A \emph{proposition} $p$ of interest, about which agents can hold beliefs.
 
 \item A \emph{belief configuration}    
    $\Blf{:}\Agents{\rightarrow}[0,1]$ s.t.
    each value $\Blf_{\agent{i}}$ is the instantaneous confidence of agent
    $\agent{i}{\in}\Agents$ in the veracity of proposition $p$.
    Extreme values $0$ and $1$ represent a firm belief in, respectively, the falsehood or truth of  $p$.
 
 \item A \emph{polarization measure} $\Pol{:}[0,1]^{\Agents}{\rightarrow}\R$ 
 mapping belief configurations to real numbers.
 The value $\Pfun{\Blf}$ indicates how polarized belief configuration 
 $\Blf$ is.
\end{itemize}

There are several polarization measures described in the literature.
In this work we adopt the influential measure proposed by 
Esteban and Ray~\cite{Esteban:94:Econometrica}.

\begin{definition}[Esteban-Ray Polarization]
\label{def:poler}
Consider a set $\caly{=}\{y_0, \allowbreak y_1, \allowbreak \ldots, \allowbreak y_{k-1}\}$ of size $k$, s.t.\ each $y_i{\in}\mathbb{R}$.
Let $(\pi, y){=}(\pi_0, \allowbreak \pi_1, \allowbreak \ldots, \allowbreak \pi_{k{-}1}, \allowbreak y_0, \allowbreak y_1, \allowbreak \ldots, \allowbreak y_{k{-}1})$ 
be a \emph{distribution} on $\caly$ s.t.\ $\pi_i$ is the frequency of value $y_i{\in}\caly$ 
in the distribution. 
\footnote{W.l.o.g. we can assume the values of $\pi_i$ are all non-zero and add up to 1.
} 
The \emph{Esteban-Ray (ER) polarization measure} is defined as
$
    \PfunER{\pi, y} = K \sum_{i=0}^{k-1} \sum_{j=0}^{k-1} \pi_i^{1+\alpha} \pi_j | y_i - y_j |, 
$
where $K{>}0$ is a constant, and typically $\alpha{\approx}1.6$.
\end{definition}

The higher the value of $\PfunER{\pi, y}$, the more polarized  
distribution $(\pi,y)$ is.
The measure captures the intuition that 
polarization is accentuated by both intra-group homogeneity
and inter-group heterogeneity.
Moreover, it assumes that the total polarization 
is the sum of the effects of individual agents on one another.
\version{The measure can be derived from a set of intuitively reasonable axioms~\cite{Esteban:94:Econometrica}.}{The measure can be derived from a set of intuitively reasonable axioms~\cite{Esteban:94:Econometrica}, described in 
Appendix~\ref{sec:polar-axioms}.}

Note that $\PolER$ is defined on a discrete distribution, 
whereas in our model a general polarization metric is defined on 
a belief configuration $\Blf{:}\Agents{\rightarrow}[0,1]$. 
To apply $\PolER$ to our setup we convert the belief configuration 
$\Blf$ into an appropriate distribution $(\pi,y)$.

\begin{definition}[$k$-bin polarization]\label{k-bin:def}
Let $D_k$ be a discretization of the interval $[0,1]$  into 
$k{>}0$ consecutive non-overlapping, non-empty intervals (\emph{bins})  $I_0,I_1,\ldots, I_{k-1}$.
We use the term \emph{borderline  points} of $D_k$ to refer to the end-points  of $I_0,I_1,\ldots, I_{k-1}$  different from 0 and 1. We assume an underlying discretization $D_k$ throughout the paper. 

Given $D_k$ and a belief configuration $B$, define the distribution $(\pi, y)$ as follows. Let $\caly{=}\{y_0,y_1,\ldots,y_{k-1}\}$ where each $y_i$ is the mid-point of $I_i$,
and let  $\pi_i$ be the fraction of agents having their belief in $I_{i}.$
 The polarization measure $\Pol$ of $B$ is  $\Pfun{\Blf} = \PfunER{\pi, y}$.

\end{definition}
	
Notice that when there is consensus about the proposition $p$ of interest, 
i.e., when all agents in belief configuration $\Blf$ hold the same belief 
value, we have $\Pfun{\Blf}{=}0$. 
This happens exactly when all agents' beliefs fall within the same bin of the underlying discretization $D_k$. The following property is an easy consequence from Def.~\ref{def:poler} and Def.~\ref{k-bin:def}. 

\begin{restatable}[Zero Polarization]{proposition}{respolconsensus}
\label{pol-consensus}
Let $D_k{=}I_0,I_1,\ldots, I_{k-1}$ 
be the discretization of $[0,1]$ in Def.~\ref{k-bin:def}. Then 
 $\Pfun{\Blf}{=}0$ iff there exists $m{\in}\{0,\ldots,k{-}1\}$ s.t. for all $i{\in}\Agents,$
 $\Blf_{\agent{i}}{\in}I_m$. 
\end{restatable}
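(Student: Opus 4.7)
The proposition asserts zero polarization is exactly equivalent to all agents lying in a single bin. Since this is a direct unpacking of Definitions~\ref{def:poler} and~\ref{k-bin:def}, the plan is to substitute the definitions, verify the sum is a sum of non-negative terms, and argue each direction separately. The main small fact I will need is that distinct bins have distinct midpoints.

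\textbf{Preliminary observation.} Because $D_k$ partitions $[0,1]$ into $k$ consecutive non-empty non-overlapping intervals $I_0, \ldots, I_{k-1}$, the midpoints $y_0, \ldots, y_{k-1}$ are pairwise distinct, so $|y_i - y_j| > 0$ whenever $i \neq j$. Also, by construction of $(\pi, y)$, each $\pi_i \in [0,1]$ and $\sum_i \pi_i = 1$. Finally, every individual summand $\pi_i^{1+\alpha} \pi_j |y_i - y_j|$ in $\PfunER{\pi,y}$ is non-negative, and $K > 0$.

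\textbf{Backward direction ($\Leftarrow$).} Suppose there exists $m$ such that $\Blf_{\agent{i}} \in I_m$ for all $i \in \Agents$. Then $\pi_m = 1$ and $\pi_j = 0$ for all $j \neq m$. Every term of the double sum either contains a factor $\pi_j = 0$ with $j \neq m$, or has $i = j = m$ and hence $|y_i - y_j| = 0$. So $\PfunER{\pi,y} = 0$, which yields $\Pfun{\Blf} = 0$.

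\textbf{Forward direction ($\Rightarrow$).} Suppose $\Pfun{\Blf} = 0$, i.e., $\PfunER{\pi,y} = 0$. Since $K > 0$ and all summands are non-negative, every summand must vanish. For any $i \neq j$ we have $|y_i - y_j| > 0$ by the preliminary observation, so the vanishing of the $(i,j)$-term forces $\pi_i^{1+\alpha} \pi_j = 0$, i.e., $\pi_i = 0$ or $\pi_j = 0$. Thus at most one index $m$ can have $\pi_m > 0$; combined with $\sum_i \pi_i = 1$, exactly one $m$ satisfies $\pi_m = 1$ and all others are $0$. By definition of $\pi$, this means every agent's belief lies in $I_m$, as required.

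I do not expect any real obstacle: the only subtlety is making sure the bin midpoints are pairwise distinct (which is immediate from the non-empty, non-overlapping, consecutive structure) and handling the $i = j$ terms, which vanish trivially since $|y_i - y_i| = 0$. The proof is short enough that it can be stated in a few lines in the paper.
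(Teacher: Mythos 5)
Your proof is correct and is precisely the routine unpacking the paper has in mind: the authors state this proposition without proof, calling it an easy consequence of Definitions~\ref{def:poler} and~\ref{k-bin:def}, and your two directions (non-negativity of the summands plus distinctness of bin midpoints for the forward direction, and the single nonzero $\pi_m$ for the backward direction) are exactly the intended argument. No gaps.
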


\subsubsection{Dynamic Elements of the Model}
\label{sec:model-dynamic}

\emph{Dynamic elements} formalize the evolution of agents' beliefs as they interact over time and are exposed to different opinions. They include:


\begin{itemize}
\item A \emph{time frame} $\calt{ = }\{0, 1, 2, \ldots \}$ representing the 
discrete passage of time.

\item A \emph{family of belief configurations} $\{\Blft{t}{:}\Agents{\rightarrow}[0,1]\}_{t{\in}{\calt}}$
s.t.\ each $\Blft{t}$ is the belief configuration of agents in 
$\Agents$ w.r.t. proposition $p$ at time step $t{\in}\calt$.

\item A \emph{weighted directed graph} $\Inter{:}\Agents{\times}\Agents {\rightarrow}[0,1].$  The value $\Inter(\agent{i},\agent{j})$, written $\Ifun{\agent{i}}{\agent{j}}$, 
represents the \emph{direct influence} that agent $i$ has on agent $j$, or the \emph{weight} $\agent{i}$ carries with $\agent{j}$. A higher value means stronger weight. Conversely,  $\Inter_{\agent{i},\agent{j}}$ can also be viewed as the \emph{trust} or \emph{confidence} that  $j$ has on $i$.  We assume that $\Ifun{\agent{i}}{\agent{i}}{=}1$, meaning that agents are self-confident. We shall often refer to $\Inter$ simply as the \emph{influence} (graph) $\Inter$. 

We distinguish, however, the direct influence $\Inter_{\agent{i},\agent{j}}$ that $\agent{i}$ has on  $\agent{j}$ 
from the \textit{overall effect} of $\agent{i}$ in $\agent{j}$'s belief.
This effect is a combination of various factors, including
direct influence, their current opinions, the topology of the influence graph, and how agents reason. This overall effect is captured by the update function below.

\item  An \emph{update function} $\Upd{:}(\Blft{t},\Inter){\mapsto}\Blft{t+1}$ 
mapping belief configuration $\Blft{t}$ at time $t$ and influence graph $\Inter$ to new belief configuration $\Blft{t+1}$ at time $t{+}1$.
This function models the evolution of agents' beliefs over time.
We adopt the following premises.
\end{itemize}


\begin{enumerate}[(i)]
    \item \textbf{Agents present some Bayesian reasoning}: 
    Agents' beliefs are updated in every time step by combining their current belief with a \emph{correction term} that incorporates the new evidence they are exposed to
\todo{Santiago to Any: Either "Agents' beliefs are", or "Agents' belief is"}
    in that step --i.e., other agents' opinions.
    More precisely, when agent $\agent{j}$ interacts with agent $\agent{i}$,
    the former affects the latter moving $\agent{i}$'s belief towards $\agent{j}$'s, 
    proportionally to the difference $\Bfun{\agent{j}}{t}{ - }\Bfun{\agent{i}}{t}$ in their beliefs. 
    The intensity of the move is proportional to the influence $\Ifun{\agent{j}}{\agent{i}}$ 
    that $\agent{j}$ carries with $\agent{i}$. 
    The update function produces an overall correction term for each agent as the average of all 
    other agents' effects on that agent,  and then incorporates this term into the agent's current belief.~\footnote{Note that this assumption implies that an agent has an influence on himself, 
    and hence cannot be used as a ``puppet'' who immediately assumes another's agent's belief.} 
    The factor $\Ifun{\agent{j}}{\agent{i}}$  allows the model to capture 
    \emph{authority bias}~\cite{Ramos:19:Book},
    by which agents' influences on each other may have different intensities (by, e.g., giving 
    higher weight to an authority's opinion).
    
    \item \textbf{Agents may be prone to confirmation bias}:
    Agents may give more weight to evidence supporting their 
    current beliefs while discounting evidence contradicting them,
    independently from its source.
    This behavior in known in the psychology literature as 
    \emph{confirmation bias}~\cite{Aronson10}, 
    and is captured in our model as follows.
    When agent $\agent{j}$ interacts with agent $\agent{i}$, the update function moves agent $\agent{i}$'s belief toward 
    that of agent $\agent{j}$, proportionally to the influence $\Ifun{\agent{j}}{\agent{i}}$ of $\agent{j}$ on $\agent{i}$, but with a caveat: the move is stronger when $\agent{j}$’s belief is similar to $\agent{i}$’s than when it is dissimilar. 
\end{enumerate}

The premises above are formally captured in the following update-function.

\begin{definition}[Confirmation-bias] 
\label{def:confirmation-bias}
    Let $\Blft{t}$ be a belief configuration at time $t{\in}\calt$, and  $\Inter$ be an influence graph. The \emph{confirmation-bias update-function} is the map  $\UpdCB{:}({\Blft{t}},{\Inter})\mapsto\Blft{t+1}$ with  $\Blft{t+1}$ given by
    $
        \Bfun{\agent{i}}{t+1} = \Bfun{\agent{i}}{t} + \nicefrac{1}{|\Agents_{\agent{i}}|} \sum_{\agent{j} \in \Agents_{\agent{i}}} 
        \CBfun{i}{j}{t} \, \Ifun{j}{i} \, (\Bfun{j}{t} - \Bfun{i}{t}),
   $
   for every agent $\agent{i}{\in}\Agents$, 
   where $\Agents_{\agent{i}}=\{\agent{j}{\in}\Agents \mid \Inter_{j,i}{>}0 \}$ is the set of \emph{neighbors} of $\agent{i}$ and  $\CBfun{i}{j}{t}{=}1{-}|\Bfun{j}{t}{-}\Bfun{i}{t}|$ is the \emph{confirmation-bias factor} of  $i$ w.r.t. $j$ given their beliefs at time $t$.
    
\end{definition}

The expression $\nicefrac{1}{|\Agents_{\agent{i}}|} \sum_{\agent{j} \in \Agents_{\agent{i}}} \CBfun{i}{j}{t} \, \Ifun{j}{i} \, (\Bfun{j}{t} - \Bfun{i}{t})$ in Def.~\ref{def:confirmation-bias} 
    is a \emph{correction term} incorporated into agent $\agent{i}$'s 
    original belief $\Bfun{\agent{i}}{t}$ at time $t$.
    The correction is the average of the effect of each 
    neighbor $\agent{j}{\in}\Agents_{i}$ on agent $\agent{i}$'s belief 
    at that time step.
    The value  $\Bfun{\agent{i}}{t+1}$ is the resulting updated belief of
    agent $\agent{i}$ at time $t{+}1$.

The confirmation-bias factor $\CBfun{i}{j}{t}$ 
lies in the interval $[0,1]$, and the lower its value, the more agent $\agent{i}$ discounts 
the opinion provided by agent $\agent{j}$ when incorporating it.
It is maximum when agents' beliefs are identical, and minimum they are extreme opposites.

\begin{remark}[Classical Update: Authority Non-Confirmatory Bias]\label{authority:bias:remark}
In this paper we focus on confirmation-bias update and, unless otherwise 
stated, assume the underlying function is given by Def.~\ref{def:confirmation-bias}. 
Nevertheless, in Sections \ref{circulation:section} and \ref{sec:degroot}
we will consider a \emph{classical update} $\UpdR{:}({\Blft{t}},{\Inter}){\mapsto}\Blft{t+1}$ that captures
non-confirmatory authority-bias and
is obtained by replacing the confirmation-bias factor $\CBfun{i}{j}{t}$ 
in Def.~\ref{def:confirmation-bias} with 1.
That is,
$\Bfun{\agent{i}}{t+1}{=}\Bfun{\agent{i}}{t}{+}\nicefrac{1}{|\Agents_{\agent{i}}|} \sum_{\agent{j} \in \Agents_{\agent{i}}} 
         \Ifun{j}{i} \, (\Bfun{j}{t}{-}\Bfun{i}{t}).$
         (We refer to this function as \emph{classical} because it is
         closely related to the standard update function of the DeGroot 
         models for social learning from Economics ~\cite{degroot}. This correspondence will be formalized in Section \ref{sec:degroot}.)
\end{remark}
\subsection{Running Example and Simulations}
\label{sec:simulations}
We now present a running example and several simulations
that motivate our theoretical results. 
Recall that we assume  $\Ifun{\agent{i}}{\agent{i}}{=}1$ for every $i{\in}\Agents$. For simplicity, in all figures of influence graphs we omit self-loops. 

In all cases we compute the polarization measure  (Def.~\ref{k-bin:def}) using 
a discretization $D_k$ of  $[0,1]$ for $k{=}5$ bins, 
each representing a possible general position 
w.r.t. the veracity of the proposition $p$ of interest:
\textit{strongly against}, $[0,0.20)$;
\textit{fairly against}, $[0.20,0.40)$;
\textit{neutral/unsure}, $[0.40,0.60)$;
\textit{fairly in favour}, $[0.60,0.80)$; and
\textit{strongly in favour}, $[0.80,1]$.\footnote{Recall from Def.~\ref{k-bin:def} that our model allows arbitrary discretizations $D_k$ --i.e., different number of bins, with not-necessarily
uniform widths-- depending on the scenario of interest.}
We set parameters $\alpha{=}1.6$, 
as suggested by Esteban and Ray~\cite{Esteban:94:Econometrica},
and $K{=}1\,000$.
In all definitions we let $\Agents{=}\{0, 1,\ldots, n{-}1 \}$, 
and $\agent{i},\agent{j}{\in}\Agents$ be generic agents.

As a running example we consider the following hypothetical situation.

\begin{example}[Vaccine Polarization]
\label{running-example}
Consider the sentence ``vaccines are safe'' as the proposition $p$ of interest. 
Assume a set $\Agents$ of $6$ agents that is initially \emph{extremely polarized} about $p$:
agents 0 and 5 are absolutely confident, respectively, in the falsehood or
truth of $p$, whereas the others are equally split into strongly in favour and strongly against $p$. 

Consider first the situation described by the influence graph in Fig.~\ref{fig:circulation-graph-a}. 
Nodes 0, 1 and 2 represent anti-vaxxers, whereas the rest are pro-vaxxers.  
In particular, note that although initially in total disagreement about $p$, Agent 5 carries a lot of weight with Agent 0. 
In contrast,  Agent $0$'s opinion is very close to that of Agents 1 and 2, even if they do not have any direct influence over him. 
Hence the evolution of Agent $0$'s beliefs will be mostly shaped by that of Agent $5$. 
As can be observed in the evolution of agents' opinions in Fig.~\ref{fig:circulation-graph-b}, Agent 0 moves from being initially strongly against to being fairly in favour of $p$ around time step 8. 
Moreover, polarization eventually vanishes (i.e., becomes zero) around time 20, as agents reach the consensus of being fairly against  $p$.

\begin{figure}[tb]
    \centering
    \begin{subfigure}[t]{.32\textwidth}
      \centering
      \includegraphics[width=1.0\linewidth]{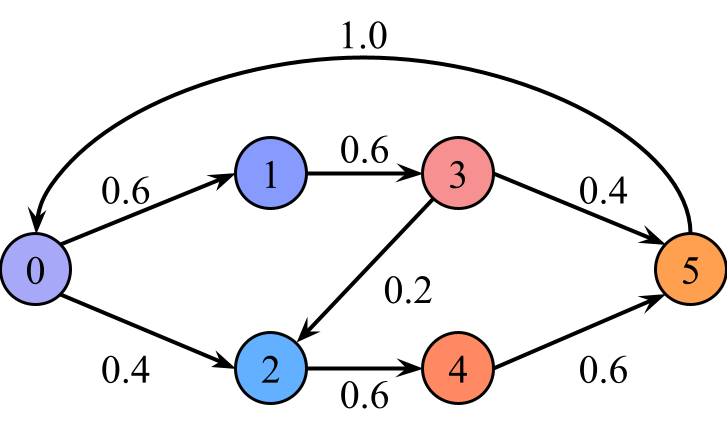}
        \caption{Influence graph $\Inter$ for Ex.~\ref{running-example}.} 
        \label{fig:circulation-graph-a}
    \end{subfigure}
    \hfill
    \begin{subfigure}[t]{.32\textwidth}
      \centering
      \includegraphics[width=1.0\linewidth]{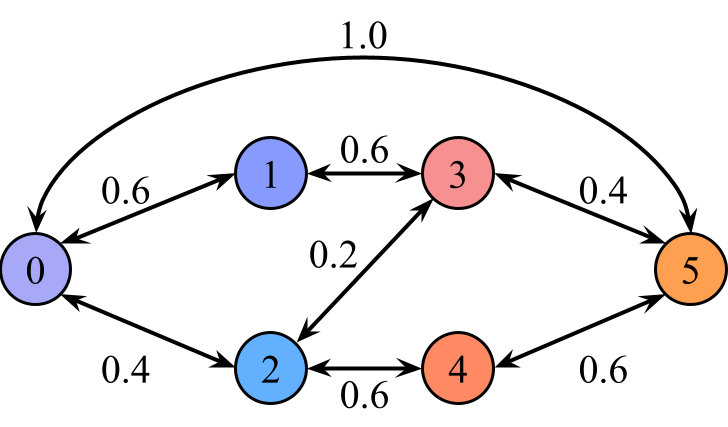}
  \caption{Adding inverse influences to Fig.~\ref{fig:circulation-graph-a}.} 
        \label{fig:circulation-graph-c}
    \end{subfigure}
    \hfill
    \begin{subfigure}[t]{.32\textwidth}
      \centering
      \includegraphics[width=1.0\linewidth]{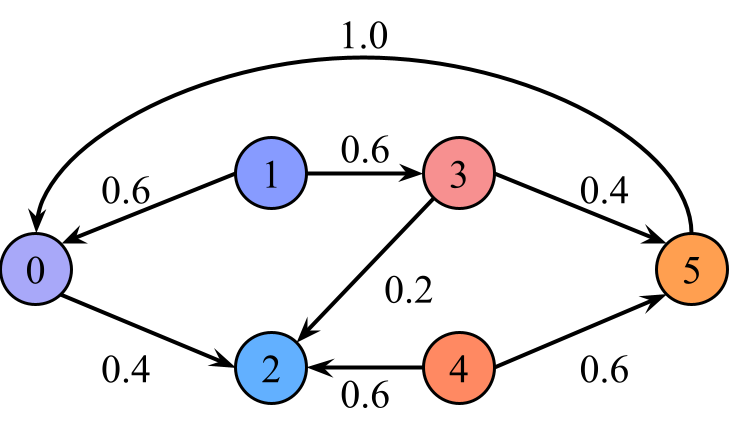}
      \caption{Inversion of $\Ifun{0}{1}$ and $\Ifun{2}{4}$ in Fig.~\ref{fig:circulation-graph-a}.}
      \label{fig:weakly-graph}
    \end{subfigure}
    \\
    \begin{subfigure}[t]{.32\textwidth}
      \centering
      \includegraphics[width=1.0\linewidth]{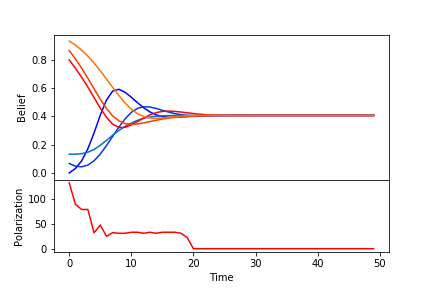}
         \caption{Beliefs and pol. for Fig.~\ref{fig:circulation-graph-a}.} 
        \label{fig:circulation-graph-b}
     \end{subfigure}
     \hfill
    \begin{subfigure}[t]{.32\textwidth}
      \centering
      \includegraphics[width=1.0\linewidth]{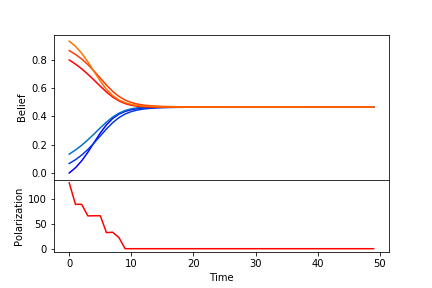}
   \caption{Beliefs and pol. for Fig.~\ref{fig:circulation-graph-c}.} 
        \label{fig:circulation-graph-d}
    \end{subfigure} 
    \hfill
    \begin{subfigure}[t]{.32\textwidth}
      \centering
      \includegraphics[width=1.0\linewidth]{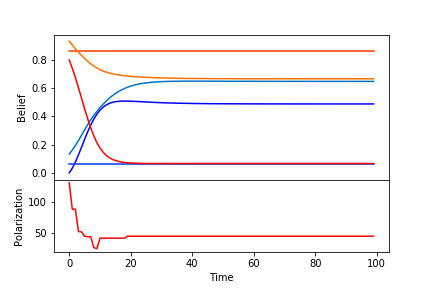}
      \caption{Belief and pol.  for  Fig.~\ref{fig:weakly-graph}.
      }
      \label{fig:weakly-graph-evol}
    \end{subfigure}
    \caption{Influence graphs and evolution of beliefs and polarization for Ex.~\ref{running-example}.}
    \label{running-example:figure}
\end{figure}

Now consider the influence graph in Fig.~\ref{fig:circulation-graph-c}, which is similar to Fig.~\ref{fig:circulation-graph-a}, but with reciprocal influences (i.e., the influence of $i$ over $j$ is the same as the influence of $j$ over $i$). Now  Agents 1 and 2  do have direct influences over Agent 0, so the evolution of Agent $0$'s belief will be partly shaped by initially opposed agents: Agent 5 and the anti-vaxxers. But since Agent $0$'s opinion is very close to that of Agents 1 and 2, the confirmation-bias factor will help keeping 
Agent $0$'s opinion close to their opinion against $p$. In particular, in contrast to the situation in Fig.~\ref{fig:circulation-graph-b},  Agent $0$ never becomes in favour of $p$. The evolution of the agents' opinions and their polarization is shown in Fig.~\ref{fig:circulation-graph-d}. Notice that polarization vanishes around time 8 as the agents reach consensus but this time they are more positive about (less against) $p$ than in the first situation. 

Finally, consider the situation in Fig.~\ref{fig:weakly-graph} obtained from Fig.~\ref{fig:circulation-graph-a} by inverting the influences of Agent 0 over Agent 1 and 
Agent 2 over Agent 4. Notice that Agents 1 and 4 are no longer influenced by anyone though they influence others. Thus, as shown in Fig.\ref{fig:weakly-graph-evol}, their beliefs do not change over time,
which means that the group does not reach consensus and polarization never disappears though it is considerably reduced. \qed 

\end{example}
The above example illustrates complex non-monotonic,  overlapping, convergent, and non-convergent evolution of agent beliefs and polarization 
even in a small case with $n{=}6$ agents. 
Next we  present simulations for several influence graph topologies 
with $n{=}1\,000$ agents, which illustrate more of this complex behavior
emerging from con\-fir\-ma\-tion-bias interaction among agents.
Our theoretical results in the next sections bring insight into 
the evolution of beliefs and polarization depending on graph topologies. 

In all simulations we limit execution to $\tmax$ time 
steps varying according to the experiment.
\version{A detailed mathematical specification of simulations can be found in the corresponding technical report~\cite{Alvim:21:ForteTechRep}.}{The complete mathematical specification of simulations are given
in Appendix~\ref{sec:simulations-specification}.}

We consider the following initial belief configurations, depicted in Fig.~\ref{fig:initial-belief-configurations}:
    a \emph{uniform} belief configuration with a set of agents whose beliefs are as varied as possible, all equally spaced in the interval $[0, 1]$;
    a \emph{mildly polarized} belief configuration with agents evenly split into two groups with moderately dissimilar inter-group beliefs compared to intra-group beliefs;
    an \emph{extremely polarized} belief configuration
    representing a situation in which half of the 
    agents strongly believe the proposition, whereas 
    half strongly disbelieve it; and
    a \emph{tripolar} configuration with 
    agents divided into three 
    groups.
    
\begin{figure}[tb]
    \centering
    \includegraphics[width=\linewidth]{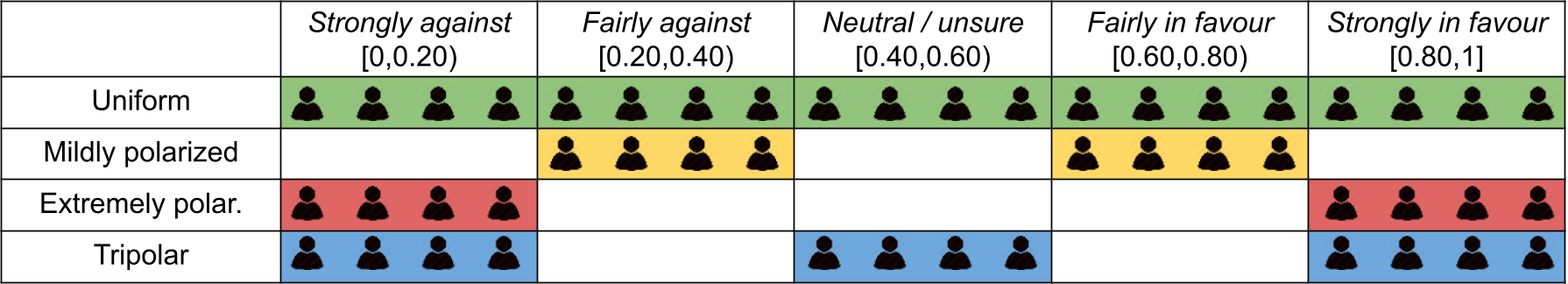}
    \caption{Depiction of different 
    initial belief configurations used in simulations.
    }
    \label{fig:initial-belief-configurations}
\end{figure}

As for influence graphs, we consider the following ones, depicted in Fig.~\ref{fig:interaction-graphs}:

\begin{figure}[tb]
    \centering
    \begin{subfigure}[t]{0.15\textwidth}
      \centering
      \includegraphics[width=\textwidth]{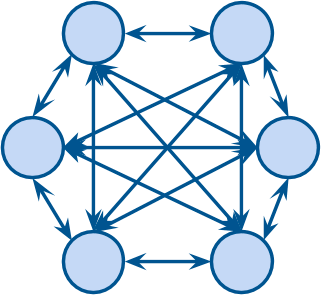}
      \caption{Clique}
      \label{fig:interaction-graphs-clique}
    \end{subfigure}
    \hfill
    \begin{subfigure}[t]{0.15\textwidth}
      \centering
      \includegraphics[width=\textwidth]{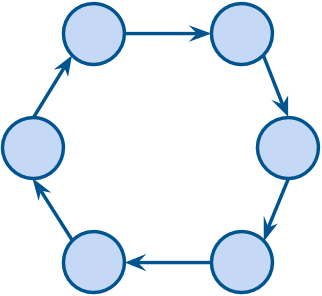}
      \caption{Circular}
      \label{fig:interaction-graphs-circular}
    \end{subfigure}
    \hfill
    \begin{subfigure}[t]{0.30\textwidth}
      \centering
      \includegraphics[width=0.9\textwidth]{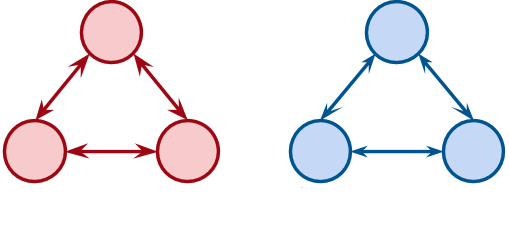}
      \caption{Disconnected groups}
      \label{fig:interaction-graphs-disconnected}
    \end{subfigure}
    \hfill
    \begin{subfigure}[t]{0.33\textwidth}
      \centering
      \includegraphics[width=0.8\textwidth]{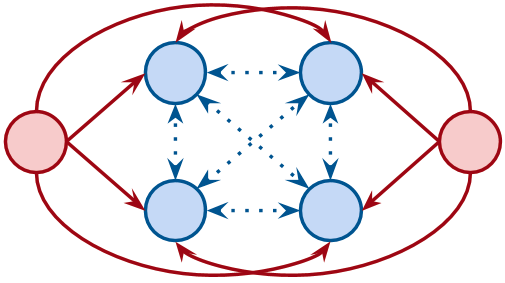}
      \caption{Unrelenting influencers}
      \label{fig:interaction-graphs-unrelenting}
    \end{subfigure}
    \caption{The general shape of influence graphs used in simulations, for $n{=}6$ agents.}
    \label{fig:interaction-graphs}
\end{figure}

\begin{itemize}
    \item A \emph{$C$-clique} influence graph $\Interclique$ in which each agent influences every other with constant 
    value $C{=}0.5$.
    This represents a social network in
    which all agents interact among themselves, and are all immune
    to authority bias.
    
    \item A \emph{circular} influence graph $\Intercircular$ 
    representing a social network in which agents can be organized
    in a circle in such a way each agent is only influenced by its 
    predecessor and only influences its successor.
    This is a simple instance of a balanced graph (in which
    each agent's influence on others is as high as the influence received, 
    as in Def.~\ref{def:circulation} ahead), which is a pattern commonly 
    encountered in some sub-networks.
    
    \item A \emph{disconnected} influence graph $\Interdisconnected$
    representing a social network sharply divided into two groups in such a way that agents within the same group 
    can considerably influence each other, but not at all
    the agents in the other group.

    
    \item An \emph{unrelenting influencers} influence graph $\Interunrelenting$ 
    representing a scenario in which two agents 
    exert significantly stronger influence on every other agent than these other agents have among themselves.
    This could represent, e.g., a social network 
    in which two totalitarian media companies dominate the news
    market, both with similarly high levels of influence on 
    all agents.
    The networks have clear agendas to push forward, and are not influenced in a meaningful way by other agents.
\end{itemize}


We simulated the evolution of agents' beliefs
and the corresponding polarization of the network 
for all combinations of initial belief configurations 
and influence graphs presented above.
The results, depicted in Figure~\ref{fig:comparing-num-bins},
will be used throughout this paper to illustrate some of our 
formal results. 
Both the Python implementation of the model and the Jupyter Notebook 
containing the simulations are available on Github~\cite{website:github-repo}.



\begin{figure}[tbp]
\centering
  \includegraphics[width=1\linewidth]{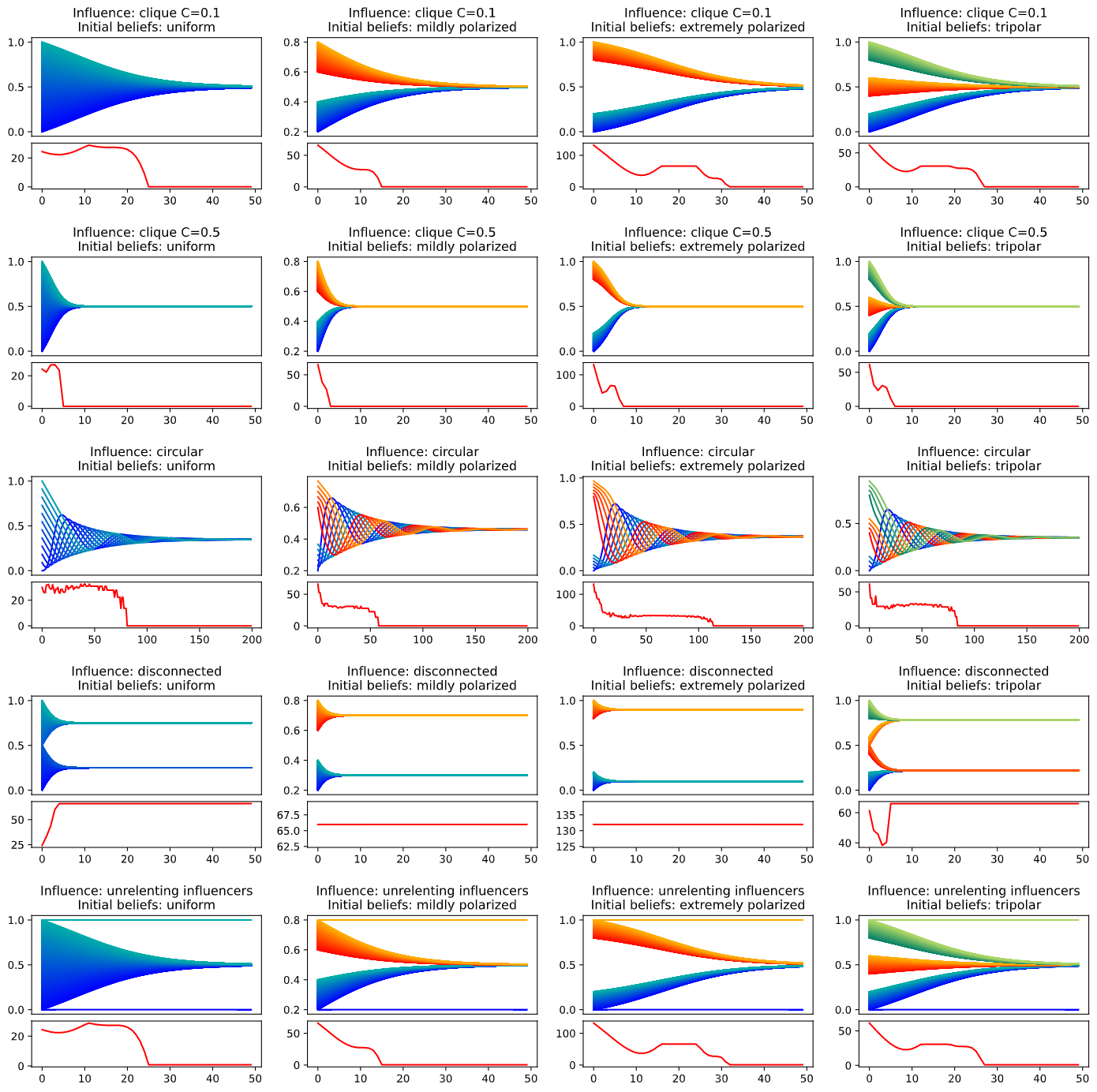}
  \caption{Evolution of belief  and  polarization 
  under confirmation bias. Horizontal axes represent time.
  Each row contains all graphs with the same influence graph, and each column 
  all graphs with the same initial belief configuration.
  Simulations of circular influences used $n{=}12$ agents, the rest used $n{=}1\,000$ agents.
  }
  \label{fig:comparing-num-bins}
\end{figure}
\section{Belief and Polarization Convergence}
\label{sec:general-result}

Polarization tends to diminish as agents approximate a \emph{consensus}, i.e.,
as they (asymptotically) agree upon a common belief value for the proposition of interest.  Here and in Section \ref{sec:specific-cases} we consider meaningful families of influence graphs that guarantee consensus \emph{under confirmation bias}.
We also identify fundamental properties of agents,
and the value of convergence. 
Importantly, we relate influence with the notion of \emph{flow} in flow networks, and use it to identify necessary conditions for polarization not converging to zero.

 
 \subsection{Polarization at the limit}
 
Prop.~\ref{pol-consensus} states that our polarization measure on 
a belief configuration  (Def.~\ref{k-bin:def}) is zero exactly when all 
belief values in it lie within the same bin of the underlying discretization
$D_k{=}I_0\ldots I_{k-1}$ of $[0,1]$. 
In our model polarization converges to zero if all agents' beliefs converge 
to a same non-borderline value. More precisely:

\begin{restatable}[Zero Limit Polarization]{lemma}{respolatlimit}
\label{pol-at-limit}
 Let $v$ be a non-borderline point of $D_k$ such that for every $\agent{i}{\in} \Agents$, $\lim_{t \to \infty} \Bfun{i}{t}{=}v.$ Then $\lim_{t \to \infty} \Pfun{\Blft{t}}{=}0$. 
\end{restatable}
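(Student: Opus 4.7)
The plan is to reduce the statement to Proposition~\ref{pol-consensus} by showing that, eventually, all agents' beliefs fall inside the \emph{same} bin of the discretization $D_k$, after which polarization is exactly zero (not merely small).

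First I would use the hypothesis that $v$ is a non-borderline point of $D_k$. By Definition~\ref{k-bin:def}, the borderline points are the endpoints of the bins $I_0,\ldots,I_{k-1}$ other than $0$ and $1$. Hence $v$ lies in the topological interior of some bin $I_m$ (treating $0$ and $1$ as interior to the leftmost and rightmost bins, respectively), which means there exists $\varepsilon>0$ such that the intersection of the $\varepsilon$-neighborhood of $v$ with $[0,1]$ is contained in $I_m$.

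Next I would exploit the finiteness of $\Agents$. Since for every $i\in\Agents$ we have $\lim_{t\to\infty}\Bfun{i}{t}=v$, for each $i$ there exists $T_i$ such that $|\Bfun{i}{t}-v|<\varepsilon$ for all $t\geq T_i$. Taking $T=\max_{i\in\Agents}T_i$ (a finite maximum since $|\Agents|=n$ is finite), we get that for every $t\geq T$ and every $i\in\Agents$, $\Bfun{i}{t}\in I_m$. Applying Proposition~\ref{pol-consensus} to $\Blft{t}$ then yields $\Pfun{\Blft{t}}=0$ for all $t\geq T$, and therefore $\lim_{t\to\infty}\Pfun{\Blft{t}}=0$.

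There is no real obstacle here; the only subtlety is the role of the non-borderline hypothesis, which is precisely what guarantees that the common limit $v$ lies strictly inside one bin and not on a boundary between two. Without that assumption, the agents' beliefs might straddle two adjacent bins in the limit, and polarization, while small, need not be exactly $0$ nor converge to $0$ through the bin-based measure. Thus the non-borderline condition is used exactly once, to produce the neighborhood of $v$ contained in a single bin, and the rest is a standard $\varepsilon$--$T$ argument plus an invocation of Proposition~\ref{pol-consensus}.
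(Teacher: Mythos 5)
Your proof is correct and follows essentially the same route as the paper's: identify the bin $I_m$ containing $v$, use the non-borderline hypothesis to extract a positive radius around $v$ contained in $I_m$, take the maximum over the finitely many agents of the times after which each belief enters that neighborhood, and conclude via Proposition~\ref{pol-consensus} that polarization is exactly zero from that point on. Your observation that polarization becomes exactly $0$ (not merely small) eventually, and your handling of $v\in\{0,1\}$ by intersecting the neighborhood with $[0,1]$, both match the paper's argument.
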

 
To see why we exclude the $k{-}1$
borderline values of $D_k$ in the above lemma, assume $v{\in}I_m$ is a borderline value. Suppose that there are two agents $i$ and $j$ whose beliefs converge to $v$, 
but with the belief of $i$ staying always within $I_m$ whereas the belief of $j$ remains outside of $I_m$.
Under these conditions one can verify, using Def.~\ref{def:poler} and Def.~\ref{k-bin:def}, that $\Pol$ will not converge to $0$. This situation is illustrated in Fig.~\ref{fig:borderline-2bin-polarization} assuming a discretization $D_2 =[0, \nicefrac{1}{2}), [\nicefrac{1}{2},1]$ whose only borderline is $\nicefrac{1}{2}$. Agents' beliefs converge to value $v{=}\nicefrac{1}{2}$, but polarization does not converge to 0. In contrast,  Fig.\ref{fig:borderline-3bin-polarization} illustrates Lem.\ref{pol-at-limit} for  $D_3=[0,\nicefrac{1}{3}), [\nicefrac{1}{3},\nicefrac{2}{3}), [\nicefrac{2}{3},1].$
~\footnote{It is worthwhile to note that this discontinuity at borderline points matches real scenarios where each bin represents a sharp action an agent takes based on his current belief value.
Even when two agents' beliefs are asymptotically converging to a same borderline value from different sides, their discrete decisions will remain distinct. 
E.g., in the vaccine case of Ex.~\ref{running-example}, even agents that are
asymptotically converging to a common belief value of $0.5$ will take different
decisions on whether or not to vaccinate, depending on which side of $0.5$ their belief falls.
In this sense, although there is convergence in the underlying belief values, there remains polarization w.r.t. real-world actions taken by agents.}

 \begin{figure}[tb]
    \centering
    \begin{subfigure}[t]{.32\textwidth}
      \centering
      \includegraphics[width=\linewidth]{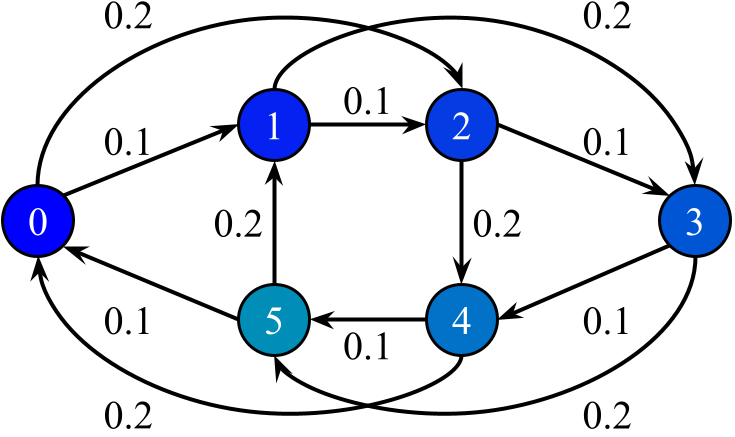}
      \caption{Influence graph.}
      \label{fig:double-circular-graph}
    \end{subfigure}
    \hfill
    \begin{subfigure}[t]{0.32\textwidth}
      \centering
      \includegraphics[width=0.9\linewidth]{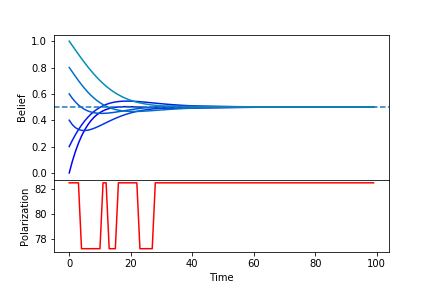}
      
      \caption{Beliefs and polarization, 2 bins, for graph in Fig.~\ref{fig:double-circular-graph}.  
      }\label{fig:borderline-2bin-polarization}
    \end{subfigure}
    \hfill
    \begin{subfigure}[t]{0.32\textwidth}
      \centering
      \includegraphics[width=0.9\linewidth]{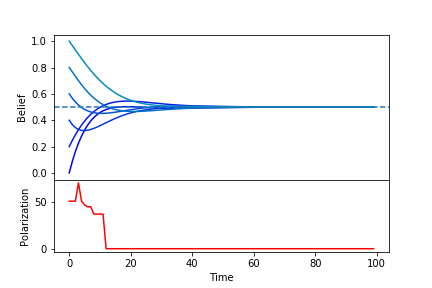}
      \caption{Beliefs and polarization, 3 bins, for graph in Fig.~\ref{fig:double-circular-graph}.  
      }\label{fig:borderline-3bin-polarization}
    \end{subfigure}
    \caption{Belief convergence to borderline value 1/2. Polarization does not converge to 0 with equal-length 2 bins (Fig.~\ref{fig:borderline-2bin-polarization}) and but it does with 3 equal-length  bins (Fig.~\ref{fig:borderline-3bin-polarization}).}
    \label{fig:borderline-example}
\end{figure}


\subsection{Convergence under Confirmation Bias in Strongly Connected Influence}
\label{sec:main-result}


We now introduce the family of \emph{strongly-connected} influence graphs, which includes cliques, that describes scenarios where each agent has an influence over all others. Such influence is not necessarily \emph{direct} in the sense defined next, or the same for all agents, as in the more specific cases of cliques.

\begin{definition}[Influence Paths] \label{def:influence-path} Let $C \in (0,1].$
We say that $\agent{i}$ has a \emph{direct influence} $C$ over $\agent{j}$, written  $\ldinfl{\agent{i}}{C}{\agent{j}}$, if $\Ifun{\agent{i}}{\agent{j}}=C.$ 
 
 An \emph{influence path} is a {finite sequence} of \emph{distinct} agents from $\Agents$ where each agent in the sequence has a direct influence over the next one. Let  $p$ be an influence path $i_0i_1\ldots i_n.$ The \emph{size} of $p$ is $|p|{=}n$. 
We also use $\ldinfl{\agent{i_0}}{C_1}{\agent{i_1}}\ldinfl{}{C_2}{}\ldots \ldinfl{}{C_n} {\agent{i_n}}$ to denote  $p$ with the direct influences along this path. We write  $\linfl{\agent{i_0}}{C}{p}{\agent{i_n}}$ to indicate that the \emph{product influence} of $i_0$ over $i_n$ along  $p$ is $C{=}C_1{\times}\ldots {\times}C_n$. 

 We often omit influence or path indices from the above arrow notations when they are unimportant or clear from the context.  We say that   $\agent{i}$ has an \emph{influence} over $\agent{j}$ if $\infl{\agent{i}}{\agent{j}}$. 
\end{definition}

The next definition is akin to the graph-theoretical notion of strong connectivity. 
\begin{definition}[Strongly Connected Influence]  We say that an influence graph $\Inter$ is \emph{strongly connected} if for all $\agent{i}$, $\agent{j}{\in} \Agents$ such that $i{\neq}j$, $\infl{\agent{i}}{\agent{j}}$.
\end{definition}

\begin{remark}\label{rmk:assumption}  For technical reasons we assume that, \emph{initially}, there are no two agents $\agent{i}, \agent{j}{\in}\Agents$ such that $\Bfun{i}{0}{=} 0$ and $\Bfun{j}{0}{=}1.$ This implies that 
 for every $\agent{i}, \agent{j}{\in}\Agents$: $\CBfun{\agent{i}}{\agent{j}}{0}{>} 0$ where  $\CBfun{\agent{i}}{\agent{j}}{0}$ is the confirmation bias of $i$ towards $j$ at time $0$ (See Def.~\ref{def:confirmation-bias}). Nevertheless, at the end of this section we will address the cases in which this condition does not hold.
\end{remark}

We shall use the notion of maximum and minimum belief values at a given time $t$. 

\begin{definition}[Extreme Beliefs]\label{def:extreme:beliefs} Define $\mx{t}=\max_{\agent{i}{\in}\Agents} \Bfun{\agent{i}}{t}$ and $\mn{t}= \max_{\agent{i}{\in}\Agents} \Bfun{\agent{i}}{t}.$
 \end{definition}

It is worth noticing that \emph{extreme agents} --i.e., those holding extreme beliefs-- do not necessarily remain the same across time steps. 
Fig.~\ref{fig:circulation-graph-b} illustrates this point: 
Agent 0 goes from being the one most against the proposition of interest
at time $t{=}0$ to being the one most in favour of it around $t{=}8$.
Also, the third row of Fig.~\ref{fig:comparing-num-bins} shows simulations
for a circular graph under several initial belief configurations. 
Note that under all initial belief configurations
different agents alternate as maximal and minimal belief holders.

Nevertheless, in what follows will show that the beliefs of all agents, under strongly-connected influence  and confirmation bias, converge to the same value since the difference between $\mn{t}$ and $\mx{t}$ goes to 0 as $t$ approaches infinity. 
\todo{Bernardo: before this part said that it was a distinctive property of strongly connected graphs, but this is valid for all graphs}
We begin with a  lemma stating a property of the confirmation-bias update: \emph{The belief value of any agent at any time is bounded by those from extreme agents in the previous time unit}. 

\begin{restatable}[Belief Extremal Bounds]{lemma}{reslemmacbmaxdiffmin}
\label{lemma:cb-maxdiffmin}
 For every $i\in \Agents$,   $\mn{t} \leq \Bfun{\agent{i}}{t{+}1} \leq \mx{t}.$
\end{restatable}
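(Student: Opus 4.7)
The plan is to exhibit $\Bfun{i}{t+1}$ as a convex combination of values that are all bounded between $\mn{t}$ and $\mx{t}$, from which both inequalities follow immediately. Concretely, I would start from the definition of the confirmation-bias update and simply redistribute the terms in the sum to separate the contribution of $\Bfun{i}{t}$ from the contributions of its neighbors:
\begin{equation*}
\Bfun{i}{t+1} \;=\; \Bfun{i}{t}\!\left(1 - \tfrac{1}{|\Agents_i|}\!\sum_{j\in\Agents_i}\!\CBfun{i}{j}{t}\,\Ifun{j}{i}\right) \;+\; \sum_{j\in\Agents_i} \tfrac{1}{|\Agents_i|}\,\CBfun{i}{j}{t}\,\Ifun{j}{i}\,\Bfun{j}{t}.
\end{equation*}
Denote the coefficient of $\Bfun{j}{t}$ by $w_{ij} := \CBfun{i}{j}{t}\,\Ifun{j}{i}/|\Agents_i|$ and the coefficient of $\Bfun{i}{t}$ by $w_{ii}^{\ast} := 1 - \sum_{j\in\Agents_i} w_{ij}$. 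Note that $|\Agents_i|\geq 1$ since $\Ifun{i}{i}=1$ implies $i\in\Agents_i$, so the division is well-defined.

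The next step is to verify that the $w_{ij}$ together with $w_{ii}^{\ast}$ form a set of convex weights. Non-negativity of each $w_{ij}$ is immediate from the fact that $\Ifun{j}{i}\in[0,1]$ and $\CBfun{i}{j}{t}=1-|\Bfun{j}{t}-\Bfun{i}{t}|\in[0,1]$ (using that beliefs lie in $[0,1]$). For $w_{ii}^{\ast}\geq 0$, I would observe that each $\CBfun{i}{j}{t}\,\Ifun{j}{i}\leq 1$, so summing $|\Agents_i|$ such terms and dividing by $|\Agents_i|$ gives at most $1$. Finally, by construction the weights sum to $1$.

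From here the two bounds are immediate: since $\Bfun{i}{t+1}$ is a convex combination of $\Bfun{i}{t}$ and the $\Bfun{j}{t}$ with $j\in\Agents_i$, and every one of these values lies in the interval $[\mn{t},\mx{t}]$ by definition of $\mn{t}$ and $\mx{t}$, the same interval contains $\Bfun{i}{t+1}$.

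There is no real obstacle here; the statement is essentially a sanity check on the update rule, and the only mildly delicate point is recognizing the rewriting as a convex combination and verifying that $w_{ii}^{\ast}\geq 0$ (which uses that both the confirmation-bias factor and the direct influence are bounded by $1$, so their average over neighbors cannot exceed $1$). No appeal to the strongly-connected assumption of Remark~\ref{rmk:assumption} is needed, which matches the remark that this bound holds for all graphs.
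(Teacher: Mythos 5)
Your proof is correct and rests on the same underlying observation as the paper's: since $\CBfun{i}{j}{t}\,\Ifun{j}{i}\in[0,1]$, the update writes $\Bfun{i}{t+1}$ as a convex combination of values all lying in $[\mn{t},\mx{t}]$. The paper organizes this as a chain of inequalities (first replacing $\Bfun{j}{t}$ by $\mx{t}$, then the coefficients by $1$) and handles the lower bound "similarly," whereas your explicit convex-weight rewriting yields both bounds at once — a marginally cleaner packaging of the same argument.
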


The next corollary follows from the assumption in Rmk.~\ref{rmk:assumption} and Lemma \ref{lemma:cb-maxdiffmin}.

\begin{restatable}[]{corollary}{rescorbiasfactor}
\label{cor:biasfactor} 
For every $\agent{i}, \agent{j}{\in}\Agents$, $t{\geq}0$: $\CBfun{\agent{i}}{\agent{j}}{t}{>}0$.
\end{restatable}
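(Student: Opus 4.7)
Unfolding the definition, $\CBfun{\agent{i}}{\agent{j}}{t} = 1 - |\Bfun{j}{t} - \Bfun{i}{t}|$, so it suffices to show that for every $t{\geq}0$ and every pair of agents $\agent{i}, \agent{j}$ one has $|\Bfun{j}{t} - \Bfun{i}{t}| < 1$. Bounding this pairwise distance by the spread $\mx{t} - \mn{t}$ of the extremal beliefs from Def.~\ref{def:extreme:beliefs}, it is enough to establish that $\mx{t} - \mn{t} < 1$ for all $t$.

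The plan is to prove this by induction on $t$, with the base case handled by the assumption of Rmk.~\ref{rmk:assumption} and the inductive step handled by Lem.~\ref{lemma:cb-maxdiffmin}. For the base case $t{=}0$: since initially no two agents hold the extreme beliefs $0$ and $1$ simultaneously, we cannot have both $\mx{0}{=}1$ and $\mn{0}{=}0$, hence $\mx{0} - \mn{0} < 1$. For the inductive step, Lem.~\ref{lemma:cb-maxdiffmin} says $\mn{t} \leq \Bfun{\agent{i}}{t+1} \leq \mx{t}$ for every $i{\in}\Agents$; taking maximum and minimum over $i$ yields $\mx{t+1} \leq \mx{t}$ and $\mn{t+1} \geq \mn{t}$, so the spread is non-increasing, i.e.\ $\mx{t+1} - \mn{t+1} \leq \mx{t} - \mn{t} < 1$ by the induction hypothesis.

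Concluding, for any $\agent{i}, \agent{j}{\in}\Agents$ and any $t{\geq}0$,
\[
    |\Bfun{j}{t} - \Bfun{i}{t}| \;\leq\; \mx{t} - \mn{t} \;<\; 1,
\]
hence $\CBfun{\agent{i}}{\agent{j}}{t} = 1 - |\Bfun{j}{t} - \Bfun{i}{t}| > 0$, as required.

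There is essentially no hard step here: the argument is just a monotonicity observation about the extremal beliefs combined with the initial non-degeneracy hypothesis of Rmk.~\ref{rmk:assumption}. The only subtle point worth making explicit is that Rmk.~\ref{rmk:assumption} is needed precisely to rule out the pathological configuration $\Bfun{i}{0}{=}0$, $\Bfun{j}{0}{=}1$, which would make $\CBfun{\agent{i}}{\agent{j}}{0}{=}0$ and freeze the confirmation-bias interaction between $\agent{i}$ and $\agent{j}$ forever; once this single case is excluded at $t{=}0$, the non-expansiveness provided by Lem.~\ref{lemma:cb-maxdiffmin} propagates the strict inequality to all later time steps.
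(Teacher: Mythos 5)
Your proof is correct and follows exactly the route the paper intends: the paper gives no separate proof of this corollary, simply noting that it follows from the initial non-degeneracy assumption of Rmk.~\ref{rmk:assumption} together with Lem.~\ref{lemma:cb-maxdiffmin} (whose immediate consequence, the monotonicity of $\mx{\cdot}$ and $\mn{\cdot}$ in Cor.~\ref{cor:cb-mbefore-mafter}, is precisely the non-increasing-spread argument you spell out). Your write-up just makes that chain of reasoning explicit, and correctly observes that Lem.~\ref{lemma:cb-maxdiffmin} itself does not depend on this corollary, so there is no circularity.
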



Note that monotonicity does not necessarily hold for belief evolution. This 
is illustrated by Agent 0's behavior in Fig.~\ref{fig:circulation-graph-b}. However, it follows immediately from Lemma~\ref{lemma:cb-maxdiffmin} that $\mn{\cdot}$ and $\mx{\cdot}$ are monotonically increasing and decreasing functions of $t$. 
    
\begin{restatable}[Monotonicity of Extreme Beliefs]{corollary}{rescorcbmbeforemafter}   \label{cor:cb-mbefore-mafter} 
$\mx{t+1}{\leq}\mx{t}$ and $min^{t+1}{\geq}\mn{t}$ for all $t{\in}\nat$.
 \end{restatable}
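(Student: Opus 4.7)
The plan is to derive the corollary directly from the Belief Extremal Bounds lemma (Lemma \ref{lemma:cb-maxdiffmin}), which is the immediately preceding result. That lemma already does the substantive work: it shows that after one update step, every agent's belief is squeezed between the previous time step's minimum and maximum. The corollary is then just the observation that taking maxima (resp.\ minima) over all agents preserves these bounds.

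Concretely, I would proceed as follows. Fix $t \in \nat$. By Lemma \ref{lemma:cb-maxdiffmin}, for every agent $\agent{i} \in \Agents$ we have
\[
\mn{t} \;\leq\; \Bfun{\agent{i}}{t+1} \;\leq\; \mx{t}.
\]
Since the right-hand inequality holds uniformly in $\agent{i}$, taking the maximum over $\agent{i} \in \Agents$ on the left yields
\[
\mx{t+1} \;=\; \max_{\agent{i}\in\Agents} \Bfun{\agent{i}}{t+1} \;\leq\; \mx{t}.
\]
Symmetrically, taking the minimum over $\agent{i}\in\Agents$ of the left inequality gives
\[
\mn{t+1} \;=\; \min_{\agent{i}\in\Agents} \Bfun{\agent{i}}{t+1} \;\geq\; \mn{t}.
\]
(Note: Definition~\ref{def:extreme:beliefs} as written defines $\mn{t}$ with a $\max$, which looks like a typo; I will assume the intended reading $\mn{t} = \min_{\agent{i}\in\Agents}\Bfun{\agent{i}}{t}$, consistent with how the lemma and corollary use it.)

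There is essentially no obstacle here — the step is purely order-theoretic and does not use the confirmation-bias structure, influence graph connectivity, or the assumption of Remark~\ref{rmk:assumption} beyond what Lemma \ref{lemma:cb-maxdiffmin} already consumed. The only mild care needed is to ensure that the maximum and minimum are attained (which is immediate because $\Agents$ is finite). Since $t$ was arbitrary, the two inequalities hold for all $t \in \nat$, establishing monotonicity of $\mx{\cdot}$ (nonincreasing) and $\mn{\cdot}$ (nondecreasing). This makes the proof of the corollary a one-line consequence of the lemma, as the authors themselves anticipate by saying it follows \emph{immediately}.
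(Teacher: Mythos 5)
Your proof is correct and follows exactly the route the paper intends: Corollary~\ref{cor:cb-mbefore-mafter} is stated as an immediate consequence of Lemma~\ref{lemma:cb-maxdiffmin}, and taking the maximum (resp.\ minimum) over the finite set $\Agents$ of the uniform bounds $\mn{t} \leq \Bfun{\agent{i}}{t+1} \leq \mx{t}$ is precisely that one-line step. Your side remark that Definition~\ref{def:extreme:beliefs} contains a typo (defining $\mn{t}$ with $\max$ instead of $\min$) is also correct and the intended reading is the one you adopt.
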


Monotonicity and the bounding of $\mx{\cdot}$, $\mn{\cdot}$ within $[0,1]$ lead us, via the Monotonic Convergence Theorem \cite{Sohrab:14}, to the existence of \emph{limits for beliefs of extreme agents}.

\begin{restatable}[Limits of Extreme Beliefs]{theorem}{rescorcbmaxlimitsexist}
\label{th:cb-max-limits-exist} There are $U,L{\in}[0,1]$ s.t.
$\lim_{t\to\infty} \mx{t}{=}U$ and $\lim_{t\to\infty} \mn{t}{=}L.$ 
\end{restatable}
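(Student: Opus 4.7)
The plan is a direct appeal to the Monotone Convergence Theorem (MCT) for bounded monotonic sequences of real numbers. The key observation is that both $(\mx{t})_{t\in\nat}$ and $(\mn{t})_{t\in\nat}$ are sequences of reals lying in $[0,1]$, since every belief configuration maps $\Agents$ into $[0,1]$, hence the maximum and minimum of $\{\Bfun{i}{t}\mid i\in\Agents\}$ also lie in $[0,1]$.

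The monotonicity required for MCT is already available from the preceding Corollary on Monotonicity of Extreme Beliefs: $\mx{t+1}\leq\mx{t}$ and $\mn{t+1}\geq\mn{t}$. So first I would note that $(\mx{t})_{t\in\nat}$ is non-increasing and bounded below by $0$, while $(\mn{t})_{t\in\nat}$ is non-decreasing and bounded above by $1$. Applying MCT to each sequence separately then yields real numbers $U=\inf_{t\in\nat}\mx{t}$ and $L=\sup_{t\in\nat}\mn{t}$ with $\mx{t}\to U$ and $\mn{t}\to L$. Since every term of $(\mx{t})$ and $(\mn{t})$ is in $[0,1]$, taking limits preserves the weak inequalities and we conclude $U,L\in[0,1]$, as required.

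There is no substantive obstacle here: the technical content lies entirely in the prior Corollary on Monotonicity of Extreme Beliefs, which itself rests on Lemma (Belief Extremal Bounds) stating $\mn{t}\leq\Bfun{i}{t+1}\leq\mx{t}$. Given these, the theorem is essentially a one-line consequence of MCT. The only minor care is to double-check that the sequences are indexed over $\nat$ and that the bounds $0$ and $1$ are genuine bounds for each individual sequence (not merely for the combined range), which is immediate from the definitions. Note also that although $L\leq U$ follows trivially from $\mn{t}\leq\mx{t}$ for every $t$, this ordering is not part of the claimed statement and so is not needed in the proof.
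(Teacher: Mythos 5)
Your proposal is correct and matches the paper's argument exactly: the paper likewise obtains the theorem by combining the Monotonicity of Extreme Beliefs corollary with the boundedness of $\mx{\cdot}$ and $\mn{\cdot}$ in $[0,1]$ and invoking the Monotone Convergence Theorem. No further comment is needed.
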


We still need to show that $U$ and $L$ are the same value. For this we prove a distinctive property of agents under strongly connected  influence graphs: the belief of any agent at time $t$ will influence every other agent by the time $t{+}|\Agents|{-}1$. This is precisely formalized below in Lemma \ref{lemma:cb-path-bound}. 
First, however, we introduce some bounds for confirmation-bias, influence as well as notation for the limits in Th.\ref{th:cb-max-limits-exist}.

\begin{definition}[Min Factors]\label{def:fcb-min}
Define $\CBfunM{=}\min_{\agent{i}, \agent{j} \in \Agents} \CBfun{i}{j}{0}$ as the minimal confirmation bias factor at $t{=}0$. Also let $\IfunM$ be the smallest positive influence in $\Inter$.  Furthermore, let 
$L{=}\lim_{t\to\infty} \mn{t}$ and $U{=}\lim_{t\to\infty} \mx{t}.$ 
\end{definition}

Notice that since $\mn{t}$ and $\mx{t}$ do not get further apart as the time $t$ increases (Cor.~\ref{cor:cb-mbefore-mafter}),  $\min_{\agent{i}, \agent{j}{\in} \Agents} \CBfun{i}{j}{t}$  is a non-decreasing function of $t$. Therefore $\CBfunM$  acts as a lower bound for the confirmation-bias factor in every time step.
\begin{restatable}[]{proposition}{resfcbminprop}
\label{fcb-min:prop}
$\CBfunM =  \min_{\agent{i}, \agent{j} \in \Agents} \CBfun{i}{j}{t} $ for every $t > 0$. 
\end{restatable}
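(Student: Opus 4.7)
The key observation is that $\min_{\agent{i}, \agent{j} \in \Agents} \CBfun{i}{j}{t}$ can be rewritten in terms of the extreme beliefs $\mx{t}$ and $\mn{t}$ from Def.~\ref{def:extreme:beliefs}. Since $\CBfun{i}{j}{t} = 1 - |\Bfun{j}{t} - \Bfun{i}{t}|$, minimising over $(\agent{i},\agent{j})$ is the same as maximising $|\Bfun{j}{t} - \Bfun{i}{t}|$; but the latter is achieved at a pair realising the extrema, so $\max_{\agent{i}, \agent{j}} |\Bfun{j}{t} - \Bfun{i}{t}| = \mx{t} - \mn{t}$. Therefore $\min_{\agent{i}, \agent{j}} \CBfun{i}{j}{t} = 1 - (\mx{t} - \mn{t})$, and in particular $\CBfunM = 1 - (\mx{0} - \mn{0})$.

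The plan is then to invoke the monotonicity of extreme beliefs, Cor.~\ref{cor:cb-mbefore-mafter}: $\mx{t+1} \leq \mx{t}$ and $\mn{t+1} \geq \mn{t}$. By induction on $t$ this yields $\mx{t} - \mn{t} \leq \mx{0} - \mn{0}$ for every $t \geq 0$, and taking $1 - (\cdot)$ of both sides gives $\min_{\agent{i}, \agent{j}} \CBfun{i}{j}{t} \geq \CBfunM$ for every $t > 0$. This is the $\geq$ half of the asserted equality.

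The $\leq$ half (needed for the literal equality as worded) is the main obstacle: it would require $\mx{t} - \mn{t} = \mx{0} - \mn{0}$ for every $t > 0$, i.e., that the extreme spread never strictly shrinks. Under the confirmation-bias update this fails in general: the running example in Fig.~\ref{fig:circulation-graph-b} exhibits a strictly decreasing extreme spread, so $\min_{\agent{i}, \agent{j}} \CBfun{i}{j}{t}$ strictly exceeds $\CBfunM$ for some $t > 0$. I therefore believe the displayed $=$ is a typographical slip for $\leq$: the immediately following sentence in the paper reads \emph{``Therefore $\CBfunM$ acts as a lower bound for the confirmation-bias factor in every time step''}, which records precisely the $\geq$ direction and would be vacuous if the proposition were a genuine two-sided equality.

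The honest reading that makes both the statement and the surrounding text correct is $\CBfunM = \inf_{t \geq 0} \min_{\agent{i}, \agent{j}} \CBfun{i}{j}{t}$ (equivalently, $\CBfunM \leq \min_{\agent{i}, \agent{j}} \CBfun{i}{j}{t}$ for all $t > 0$), and the monotonicity argument above proves exactly this: a non-decreasing sequence achieves its infimum at the initial index, so the infimum equals its value at $t=0$, namely $\CBfunM$. I would therefore write the proof by establishing the identity $\min_{\agent{i}, \agent{j}} \CBfun{i}{j}{t} = 1 - (\mx{t} - \mn{t})$, applying Cor.~\ref{cor:cb-mbefore-mafter} to get non-decreasingness in $t$, and concluding that $\CBfunM$ is the infimum over all $t \geq 0$ — which is the usable content of the proposition and is what every subsequent use of Prop.~\ref{fcb-min:prop} in the paper actually relies on.
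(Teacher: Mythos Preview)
Your approach is exactly the paper's: the text immediately preceding the proposition states that since $\mx{t}$ and $\mn{t}$ do not get further apart (Cor.~\ref{cor:cb-mbefore-mafter}), $\min_{\agent{i},\agent{j}\in\Agents}\CBfun{i}{j}{t}$ is non-decreasing in $t$, so $\CBfunM$ is a lower bound at every step --- and the paper offers no further proof. Your diagnosis of the ``$=$'' as a slip for ``$\leq$'' is correct and matches both the paper's surrounding prose and every subsequent use of Prop.~\ref{fcb-min:prop} (e.g., in the proof of Lem.~\ref{lemma:cb-path-bound}).
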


The factor $\CBfunM$ is used in the next result to establish that the belief of agent $\agent{i}$ at time $t$, the minimum confirmation-bias factor, and the maximum belief at $t$ act as bound of the belief of $\agent{j}$ at $t{+}|p|$, 
where $p$ is an influence path from $\agent{i}$ and $\agent{j}$.

\begin{restatable}[Path bound]{lemma}{reslemmacbpathbound}
\label{lemma:cb-path-bound} 
If $\Inter$ is strongly connected:
\begin{enumerate}
    \item Let  $p$ be an arbitrary path $\linfl{\agent{i}}{C}{p}{\agent{j}}$.  Then $\Bfun{\agent{j}}{t+|p|} \leq \mx{t} + \nicefrac{C\CBfunM^{|p|}}{|\Agents|^{|p|}}(
\Bfun{\agent{i}}{t} - \mx{t}).$

    \item  Let $\mstar^t{\in}\Agents$ be an agent holding the least belief value at time $t$ and $p$ be a path such that $\linfl{\agent{\mstar}^t}{}{p}{\agent{i}}$. Then $\Bfun{\agent{i}}{t{+}|p|} \leq \mx{t}{-}\delta$, with $\delta = \left(\nicefrac{\IfunM\CBfunM}{|\Agents|}\right)^{|p|}(U{-}L)$.
\end{enumerate}
\end{restatable}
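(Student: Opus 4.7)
The plan is to prove Part 1 by induction on the path length $|p|$, and then derive Part 2 as a specialization of Part 1 to the minimum-belief agent $\agent{\mstar}^t$, combined with the monotone limits from Thm.~\ref{th:cb-max-limits-exist}.

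For the base case of Part 1 ($|p|=1$, meaning $\Ifun{i}{j}=C$), I would unfold Def.~\ref{def:confirmation-bias} at agent $j$. Since the $k=j$ term contributes $0$ to the sum, I can rewrite
\[
\Bfun{j}{t+1} = \lambda\,\Bfun{j}{t}+\sum_{k\in\Agents_j,\, k\neq j}\mu_k\,\Bfun{k}{t},
\]
with $\mu_k=\tfrac{1}{|\Agents_j|}\CBfun{j}{k}{t}\Ifun{k}{j}\ge 0$ and $\lambda=1-\sum_{k\neq j}\mu_k\ge 0$ (because each $\CBfun{j}{k}{t}\Ifun{k}{j}\le 1$ and there are at most $|\Agents_j|-1$ off-diagonal summands). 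This is a genuine convex combination, so bounding every $\Bfun{k}{t}$ and $\Bfun{j}{t}$ by $\mx{t}$ except for the term indexed by $i$ gives $\Bfun{j}{t+1}\le \mx{t}+\mu_i(\Bfun{i}{t}-\mx{t})$. Because $(\Bfun{i}{t}-\mx{t})\le 0$ and $\mu_i\ge \tfrac{C\CBfunM}{|\Agents|}$ (using $|\Agents_j|\le|\Agents|$, $\Ifun{i}{j}=C$, and $\CBfun{j}{i}{t}\ge\CBfunM$ from Prop.~\ref{fcb-min:prop}), replacing $\mu_i$ by this smaller positive lower bound only loosens the upper bound, yielding the base case.

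For the inductive step, decompose a length-$(n{+}1)$ path into a length-$n$ prefix $p'$ ending at some $i_n$ with product influence $C'$, followed by the edge $\ldinfl{\agent{i_n}}{C_{n+1}}{\agent{j}}$. Applying the base case at time $t+n$ along this last edge and re-arranging gives $\Bfun{j}{t+n+1}\le (1-\alpha)\mx{t+n}+\alpha\,\Bfun{i_n}{t+n}$ with $\alpha=\tfrac{C_{n+1}\CBfunM}{|\Agents|}\in[0,1]$. Since $\mx{t+n}\le\mx{t}$ (Cor.~\ref{cor:cb-mbefore-mafter}) and $1-\alpha\ge 0$, I substitute $\mx{t+n}$ by $\mx{t}$ in the first summand, and apply the inductive hypothesis to $\Bfun{i_n}{t+n}$ in the second; the two $\mx{t}$-contributions collapse into a single $\mx{t}$, and the residual coefficient $\alpha\cdot\tfrac{C'\CBfunM^n}{|\Agents|^n}$ simplifies to $\tfrac{C\CBfunM^{n+1}}{|\Agents|^{n+1}}$, as required. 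Part 2 then follows by setting $\agent{i}=\agent{\mstar}^t$, so $\Bfun{\mstar^t}{t}=\mn{t}$: Part 1 produces the bound $\mx{t}+\tfrac{C\CBfunM^{|p|}}{|\Agents|^{|p|}}(\mn{t}-\mx{t})$, and using the monotone limits $\mn{t}\le L$, $\mx{t}\ge U$ together with $C\ge \IfunM^{|p|}$ tightens this to precisely $\mx{t}-\delta$.

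The main obstacle is keeping the direction of the inequalities consistent during the induction. The correction $(\Bfun{i}{t}-\mx{t})$ is non-positive, so shrinking its coefficient \emph{weakens} the upper bound, whereas Cor.~\ref{cor:cb-mbefore-mafter} in isolation would push the bound the wrong way. The remedy is to first recast the one-step bound as a proper convex combination $(1-\alpha)\mx{t+n}+\alpha\,\Bfun{i_n}{t+n}$ before invoking monotonicity; then $\mx{t+n}\le\mx{t}$ is applied only to the stand-alone $\mx{t+n}$ factor, and the inductive hypothesis is plugged cleanly into the other summand, so that the two perturbations reinforce rather than cancel.
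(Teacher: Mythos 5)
Your proof is correct and follows essentially the same route as the paper's: induction on the path length via a one-step bound, lower-bounding the coefficient of the (non-positive) correction term by $\nicefrac{C\CBfunM}{|\Agents|}$, and specializing to the minimal agent together with the monotone limits $\mn{t}\leq L$ and $\mx{t}\geq U$ for Part 2. The only cosmetic difference is that you anchor the one-step bound at $\mx{t+n}$ and recover $\mx{t}$ via the convex-combination form plus monotonicity, whereas the paper's auxiliary proposition states the one-step bound directly relative to $\mx{t}$.
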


Next we establish that all beliefs at time $t{+}|\Agents|{-}1$ are smaller
than the maximal belief at $t$ by a factor of at least $\epsilon$ depending
on the minimal confirmation bias, minimal influence and the limit values $L$ 
and $U$.

\begin{restatable}[]{lemma}{reslemcbepsilonbound}
\label{lem:cb-epsilon-bound}
Suppose that $\Inter$ is strongly-connected.
\begin{enumerate}
    \item If $\Bfun{\agent{i}}{t{+}n} \leq \mx{t} - \gamma$ and $\gamma \geq 0$ then $\Bfun{\agent{i}}{t+n+1} \leq \mx{t} - \nicefrac{\gamma}{|\Agents|}.$
    
    \item $\Bfun{\agent{i}}{t+|\Agents|-1} \leq \mx{t} - \epsilon$, where $\epsilon$ is equal to $\left(\nicefrac{\IfunM\CBfunM}{|\Agents|}\right)^{|\Agents|-1}(U-L)$.
\end{enumerate}
\end{restatable}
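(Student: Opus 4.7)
I would prove the two parts in sequence, deriving Part~2 by combining the path estimate of Lem.~\ref{lemma:cb-path-bound}(2) with iterated applications of Part~1.

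For Part~1, I would expand $\Bfun{i}{t+n+1}$ directly from Def.~\ref{def:confirmation-bias}. The summand with $j=i$ vanishes, leaving contributions only from the at most $|\Agents_{\agent{i}}|-1$ non-self neighbours. Three observations then do the work: (i) each coefficient $\CBfun{i}{j}{t+n}\,\Ifun{j}{i}$ lies in $[0,1]$; (ii) by Cor.~\ref{cor:cb-mbefore-mafter}, $\Bfun{j}{t+n}\leq \mx{t+n}\leq \mx{t}$ for every $j$; and (iii) the quantity $\mx{t}-\Bfun{i}{t+n}$ is non-negative (hypothesis, with $\gamma\geq 0$). Upper-bounding each summand by $\mx{t}-\Bfun{i}{t+n}$ gives the convex-combination-style estimate
\[
\Bfun{i}{t+n+1}\;\leq\;\Bfun{i}{t+n}+\frac{|\Agents_{\agent{i}}|-1}{|\Agents_{\agent{i}}|}\bigl(\mx{t}-\Bfun{i}{t+n}\bigr)\;=\;\mx{t}-\frac{\mx{t}-\Bfun{i}{t+n}}{|\Agents_{\agent{i}}|}.
\]
Since $|\Agents_{\agent{i}}|\leq |\Agents|$ and $\mx{t}-\Bfun{i}{t+n}\geq \gamma$, the claimed bound $\Bfun{i}{t+n+1}\leq \mx{t}-\gamma/|\Agents|$ follows.

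For Part~2, fix $\agent{i}\in\Agents$. By strong connectivity of $\Inter$, there is an influence path $\linfl{\mstar^t}{}{p}{\agent{i}}$, and since paths consist of distinct agents we have $|p|\leq |\Agents|-1$. Lem.~\ref{lemma:cb-path-bound}(2) immediately gives
\[
\Bfun{i}{t+|p|}\;\leq\;\mx{t}-\left(\frac{\IfunM\,\CBfunM}{|\Agents|}\right)^{|p|}(U-L).
\]
I would then iterate Part~1 for the remaining $|\Agents|-1-|p|$ time steps; each iteration divides the gap by at most $|\Agents|$, producing
\[
\Bfun{i}{t+|\Agents|-1}\;\leq\;\mx{t}-\frac{(\IfunM\,\CBfunM)^{|p|}}{|\Agents|^{|\Agents|-1}}\,(U-L).
\]
Since $\IfunM\,\CBfunM\in(0,1]$ and $|p|\leq|\Agents|-1$, we have $(\IfunM\,\CBfunM)^{|p|}\geq (\IfunM\,\CBfunM)^{|\Agents|-1}$, so the right-hand side collapses to $\mx{t}-\epsilon$, as required.

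The only delicate point is the degenerate case $\agent{i}=\mstar^t$, in which $|p|=0$ and the invocation of Lem.~\ref{lemma:cb-path-bound}(2) must reduce to the assertion $\mn{t}\leq \mx{t}-(U-L)$. I would justify this directly from Cor.~\ref{cor:cb-mbefore-mafter} and Th.~\ref{th:cb-max-limits-exist}: monotonicity and the definitions of the limits give $\mx{t}\geq U$ and $\mn{t}\leq L$, hence $\mx{t}-\mn{t}\geq U-L$. After this base step, iterating Part~1 exactly $|\Agents|-1$ times yields the same conclusion. The rest is routine exponent bookkeeping, so I expect the main work to be the clean derivation in Part~1 and the careful comparison of the two exponential factors in Part~2 using $\IfunM\,\CBfunM\leq 1$.
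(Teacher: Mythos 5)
Your proof is correct and follows essentially the same route as the paper's: Part~1 is the same convex-combination bound (the paper factors it through an auxiliary inequality, Prop.~\ref{prop:cb-upper-bound-inequality}(1), while you expand Def.~\ref{def:confirmation-bias} directly, which is equivalent), and Part~2 is the identical scheme of applying Lem.~\ref{lemma:cb-path-bound}(2) along a path from $\mstar^t$ and then iterating Part~1 for the remaining $|\Agents|-1-|p|$ steps, with the same exponent bookkeeping via $\IfunM\CBfunM\leq 1$. Your explicit treatment of the degenerate case $\agent{i}=\mstar^t$ is a small point the paper glosses over, but it does not change the argument.
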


Lem.~\ref{lem:cb-epsilon-bound}(2)
states that $\max^{\cdot}$ decreases by at least $\epsilon$ after  $|A|{-}1$ steps. Therefore, after $m(|A|-1)$ steps it should decrease by at least $m\epsilon$. 

\begin{restatable}[]{corollary}{rescormaxdiff}
\label{cor:max-dif} 
If $\Inter$ is strongly connected,
$\mx{t+m(|\Agents|-1)}{\leq}\mx{t}{-}m\epsilon$ for $\epsilon$ in  Lem.~\ref{lem:cb-epsilon-bound}.
\end{restatable}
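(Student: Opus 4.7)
The plan is to derive the corollary by a straightforward induction on $m$, using Lemma~\ref{lem:cb-epsilon-bound}(2) as the one-step decrement. The essential observation is that the constant $\epsilon = \left(\nicefrac{\IfunM\CBfunM}{|\Agents|}\right)^{|\Agents|-1}(U-L)$ does not depend on the starting time: $\IfunM$ is a fixed graph parameter, $\CBfunM$ is defined at $t{=}0$ (Def.~\ref{def:fcb-min}), and $U$, $L$ are the fixed limits from Theorem~\ref{th:cb-max-limits-exist}. So the same decrement of at least $\epsilon$ is available after \emph{every} block of $|\Agents|-1$ steps.

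First I would upgrade Lemma~\ref{lem:cb-epsilon-bound}(2), which is stated per agent, into a statement about $\mx{\cdot}$. Since $\Bfun{\agent{i}}{t+|\Agents|-1} \leq \mx{t} - \epsilon$ holds for every $\agent{i}\in\Agents$, taking the maximum over $\agent{i}$ on the left yields $\mx{t+|\Agents|-1} \leq \mx{t} - \epsilon$ for every $t\in\nat$.

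Then I would do induction on $m\in\nat$. The base case $m{=}0$ is just $\mx{t}\leq\mx{t}$. For the inductive step, assume $\mx{t+m(|\Agents|-1)} \leq \mx{t}-m\epsilon$. Applying the one-step decrement at time $t' \defsymbol t + m(|\Agents|-1)$ gives
\[
\mx{t+(m+1)(|\Agents|-1)} \;=\; \mx{t'+|\Agents|-1} \;\leq\; \mx{t'} - \epsilon \;\leq\; \mx{t} - (m+1)\epsilon,
\]
completing the induction.

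There is no real obstacle here beyond bookkeeping; the only thing to double-check is that Lemma~\ref{lem:cb-epsilon-bound}(2) is applicable at time $t'$ (which requires strong connectedness of $\Inter$, already assumed in the statement of the corollary) and that $\epsilon$ is the same constant at each application (which is what Prop.~\ref{fcb-min:prop} and the time-independence of $U,L,\IfunM,|\Agents|$ give us).
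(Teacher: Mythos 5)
Your proof is correct and matches the paper's (implicit) argument: the paper justifies this corollary only by the remark preceding it, namely that Lemma~\ref{lem:cb-epsilon-bound}(2) gives a decrement of at least $\epsilon$ per block of $|\Agents|-1$ steps, which iterated $m$ times gives $m\epsilon$ — exactly the induction you carry out. Your extra check that $\epsilon$ is time-independent (via Prop.~\ref{fcb-min:prop} and the fixed constants $U$, $L$, $\IfunM$, $|\Agents|$) is a worthwhile detail the paper leaves tacit.
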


We can now state that in strongly connected influence graphs extreme beliefs eventually converge to the same value. The proof uses Cor.~\ref{cor:biasfactor} and Cor.~\ref{cor:max-dif} above.

\begin{restatable}[]{theorem}{resthul}
\label{th:U=L} 
If $\Inter$ is strongly connected  then 
$\lim_{t\to\infty} \mx{t} =  \lim_{t\to\infty} \mn{t}.$ 
\end{restatable}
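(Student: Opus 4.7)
The plan is to prove $U = L$ by contradiction, leveraging the heavy machinery already established in Corollary~\ref{cor:max-dif}. First I would observe that $L \leq U$ holds automatically: since $\mn{t} \leq \mx{t}$ for every $t \in \nat$, passing to the limit via Theorem~\ref{th:cb-max-limits-exist} gives $L = \lim_{t \to \infty} \mn{t} \leq \lim_{t \to \infty} \mx{t} = U$. So the task reduces to ruling out $L < U$.

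Assume toward a contradiction that $U - L > 0$. I would then verify that the constant $\epsilon = \bigl(\nicefrac{\IfunM\CBfunM}{|\Agents|}\bigr)^{|\Agents|-1}(U - L)$ from Lemma~\ref{lem:cb-epsilon-bound} is strictly positive. This requires three ingredients: (a) $\IfunM > 0$ by its definition as the smallest \emph{positive} influence in the strongly connected graph $\Inter$; (b) $\CBfunM > 0$ by Definition~\ref{def:fcb-min} together with Remark~\ref{rmk:assumption}, which guarantees $\CBfun{i}{j}{0} > 0$ for all pairs $i,j \in \Agents$; and (c) $U - L > 0$ by the contradiction hypothesis. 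Hence $\epsilon > 0$.

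Next I would invoke Corollary~\ref{cor:max-dif}: for every $t \in \nat$ and every $m \in \nat$,
\[
\mx{t + m(|\Agents| - 1)} \;\leq\; \mx{t} - m\epsilon.
\]
Fixing $t = 0$ and letting $m \to \infty$, the right-hand side tends to $-\infty$. This contradicts the fact that belief values are confined to $[0,1]$, so $\mx{s} \geq 0$ for every time step $s$. Therefore our assumption $L < U$ is untenable, and combined with the earlier inequality $L \leq U$ we conclude $L = U$, i.e.\ $\lim_{t\to\infty} \mx{t} = \lim_{t\to\infty} \mn{t}$.

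The argument itself is short because the hard part has already been done in Lemma~\ref{lem:cb-path-bound}, Lemma~\ref{lem:cb-epsilon-bound}, and Corollary~\ref{cor:max-dif}, which exploit strong connectivity to propagate the minimum belief to every agent within $|\Agents| - 1$ steps and thereby force a uniform decrement of $\mx{\cdot}$. The only genuine subtlety at this stage is confirming that $\CBfunM$ is \emph{strictly} positive at $t = 0$, since the whole scheme relies on the confirmation-bias factor never completely silencing an influence; this is precisely what Remark~\ref{rmk:assumption} secures, and why the theorem statement implicitly carries that assumption forward.
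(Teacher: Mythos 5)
Your proof is correct and follows essentially the same route as the paper: both argue by contradiction that $U>L$ makes the $\epsilon$ of Lemma~\ref{lem:cb-epsilon-bound} strictly positive (using Remark~\ref{rmk:assumption}/Corollary~\ref{cor:biasfactor} and the positivity of $\IfunM$), then apply Corollary~\ref{cor:max-dif} to force $\mx{\cdot}$ below any bound, contradicting $\mx{t}\in[0,1]$. The only cosmetic difference is that the paper fixes a concrete $m=\ceil{\nicefrac{1}{\epsilon}}+1$ to get $\mx{0}>1$, whereas you let $m\to\infty$; the substance is identical.
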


Combining Th.~\ref{th:U=L}, the assumption in  Rmk.~\ref{rmk:assumption} and the Squeeze Theorem, we conclude that for strongly-connected graphs, all agents' beliefs converge to the same value.

\begin{restatable}[]{corollary}{restheoremcbsccconvergence}
\label{cor:cb-scc-convergence} 
If $\Inter$ is strongly connected then for all
$\agent{i},\agent{j}{\in}\Agents, \lim_{t \to \infty} \Bfun{\agent{i}}{t} {=}\lim_{t \to \infty} \Bfun{\agent{j}}{t}.$
\end{restatable}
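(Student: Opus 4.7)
The plan is to deduce this corollary directly from Theorem~\ref{th:U=L} by the Squeeze Theorem, since the heavy lifting has already been done. First, I would observe that by Definition~\ref{def:extreme:beliefs}, for every agent $\agent{i}\in\Agents$ and every time step $t\in\nat$ we have the trivial sandwich
\[
\mn{t} \;\leq\; \Bfun{\agent{i}}{t} \;\leq\; \mx{t}.
\]
Theorem~\ref{th:cb-max-limits-exist} guarantees that both $\lim_{t\to\infty}\mx{t}=U$ and $\lim_{t\to\infty}\mn{t}=L$ exist in $[0,1]$, and Theorem~\ref{th:U=L}, invoked under the strong connectivity hypothesis (which in turn relies on the assumption from Remark~\ref{rmk:assumption} to ensure via Corollary~\ref{cor:biasfactor} that the confirmation bias factors stay strictly positive), tells us that $U=L$. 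Call this common value $V$.

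Next, applying the Squeeze Theorem from elementary calculus to the inequality above with both bounds tending to the same limit $V$, I would conclude that for every $\agent{i}\in\Agents$, $\lim_{t\to\infty}\Bfun{\agent{i}}{t}=V$. Since this holds uniformly for every agent, it follows in particular that
\[
\lim_{t\to\infty}\Bfun{\agent{i}}{t} \;=\; V \;=\; \lim_{t\to\infty}\Bfun{\agent{j}}{t}
\]
for all $\agent{i},\agent{j}\in\Agents$, which is exactly the statement of the corollary.

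There is essentially no obstacle at this stage: all the analytic work (existence of the extremal limits, monotonicity, path-propagation of influence, and their coincidence) is packaged into the preceding lemmas and theorems, so this corollary is little more than a two-line epilogue. The only subtlety worth mentioning explicitly is that the hypothesis of Remark~\ref{rmk:assumption} propagates silently through the chain Corollary~\ref{cor:biasfactor} $\to$ Proposition~\ref{fcb-min:prop} $\to$ Lemma~\ref{lem:cb-epsilon-bound} $\to$ Theorem~\ref{th:U=L}; I would note this dependency in the proof for completeness, but no new estimates are needed.
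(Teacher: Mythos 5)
Your proposal is correct and matches the paper's own argument exactly: the paper likewise derives this corollary by combining Theorem~\ref{th:U=L} (under the assumption of Remark~\ref{rmk:assumption}) with the Squeeze Theorem applied to the sandwich $\mn{t}\leq\Bfun{\agent{i}}{t}\leq\mx{t}$. Nothing further is needed.
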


\subsubsection{The Extreme Cases.}\label{extreme:case} We assumed in Rmk.~\ref{rmk:assumption} that there were no two agents $\agent{i},\agent{j}$ s.t. $\Bfun{\agent{i}}{t}{=}0$ and $\Bfun{j}{t}{=}1$. 
Th.~\ref{theorem:cb-geberal-scc-convergence} below addresses
the situation in which this does not happen.
More precisely, it establishes that under confirmation-bias update, 
in any strongly-connected, non-radical society, agents' 
beliefs eventually converge to the same value.

\begin{definition}[Radical Beliefs]\label{radical:def} 
An agent $\agent{i}{\in}\Agents$ is called \emph{radical} if $\Blf_{\agent{i}}{=}0$ or $\Blf_{\agent{i}}{=}1$.
A belief configuration $\Blf$ is \emph{radical} if every $\agent{i}{\in}\Agents$ is radical.
\end{definition}




\begin{restatable}[Confirmation-Bias Belief Convergence]{theorem}{restheoremcbgeberalsccconvergence}
\label{theorem:cb-geberal-scc-convergence}
In a strongly connected influence graph and under the confirmation-bias update-function, if $\Blft{0}$ is not radical then for all $\agent{i}, \agent{j} {\in}\Agents$, $\lim_{t \to \infty} \Bfun{i}{t}{=}\lim_{t \to \infty} \Bfun{j}{t}$. Otherwise for every $\agent{i}{\in}\Agents$, $\Bfun{i}{t}{=}\Bfun{i}{t+1}{\in}\{0,1\}$.
\end{restatable}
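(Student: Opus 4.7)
The plan is to split on whether the initial configuration is radical. If $\Blft{0}$ is radical, a short induction establishes the second conclusion: for any two agents with beliefs in $\{0,1\}$, the summand $\CBfun{i}{j}{t}\Ifun{j}{i}(\Bfun{j}{t}-\Bfun{i}{t})$ in Def.~\ref{def:confirmation-bias} vanishes, because either $\Bfun{j}{t}=\Bfun{i}{t}$ or $|\Bfun{j}{t}-\Bfun{i}{t}|=1$ (forcing $\CBfun{i}{j}{t}=0$); hence $\Bfun{i}{t+1}=\Bfun{i}{t}\in\{0,1\}$ for every $t$.

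For the non-radical case the key idea is to reach, in finitely many steps, a configuration in which no agent holds an extreme belief, at which point Rmk.~\ref{rmk:assumption} applies and the time-invariance of $\UpdCB$ lets us invoke Cor.~\ref{cor:cb-scc-convergence} on the shifted trajectory. Let $M_t=\{i\in\Agents:\Bfun{i}{t}\in(0,1)\}$; by non-radicality, $M_0\neq\emptyset$. I would establish two properties. \emph{(a) Interior is absorbing:} $M_t\subseteq M_{t+1}$. Writing $\Bfun{i}{t+1}=(1-A_i^t)\Bfun{i}{t}+B_i^t$ with $A_i^t=\frac{1}{|\Agents_i|}\sum_{j\in\Agents_i}\CBfun{i}{j}{t}\Ifun{j}{i}\in[0,1]$ and $B_i^t\in[0,A_i^t]$, the equation $\Bfun{i}{t+1}=0$ would force $A_i^t=1$, whence every neighbor $j$ has $\CBfun{i}{j}{t}=\Ifun{j}{i}=1$ and in particular $\Bfun{j}{t}=\Bfun{i}{t}>0$, contradicting the required $B_i^t=0$; the case $\Bfun{i}{t+1}=1$ is symmetric. \emph{(b) Interior propagates one edge per step:} if $\Bfun{i}{t}\in\{0,1\}$ and $i$ has a neighbor $k\in M_t$, then $i\in M_{t+1}$. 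The $k$-summand strictly pushes $\Bfun{i}{t+1}$ off the extreme held by $i$, while the elementary bound $\CBfun{i}{j}{t}|\Bfun{j}{t}-\Bfun{i}{t}|=(1-|d|)|d|\leq\frac14$ on each summand keeps $|\Bfun{i}{t+1}-\Bfun{i}{t}|\leq\frac14$, so $\Bfun{i}{t+1}$ is also bounded away from the opposite extreme.

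Since $\Inter$ is strongly connected, every agent lies within graph-distance at most $|\Agents|-1$ of $M_0$, so iterating (a) and (b) yields $M_{|\Agents|-1}=\Agents$. Setting $T^{\star}=|\Agents|-1$, the configuration $\Blft{T^{\star}}$ satisfies the hypothesis of Rmk.~\ref{rmk:assumption}, and Cor.~\ref{cor:cb-scc-convergence} applied to the shifted trajectory $(\Blft{T^{\star}+s})_{s\geq 0}$ gives the common-limit conclusion. The main obstacle I expect is step (b): one must rule out the pathological possibility of an extreme agent being kicked exactly to the opposite extreme in one step. The uniform bound $(1-|d|)|d|\leq\frac14$ is what rules this out cleanly and handles both extremes at once; everything else is a short convex-combination computation or a standard reachability argument in a strongly connected digraph.
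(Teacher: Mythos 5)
Your proof is correct and follows essentially the same route as the paper's: split on radicality, and in the non-radical case propagate non-extremeness from a non-radical seed agent along influence paths so that within $|\Agents|-1$ steps the hypothesis of Rmk.~\ref{rmk:assumption} holds and Cor.~\ref{cor:cb-scc-convergence} applies to the time-shifted trajectory. If anything your bookkeeping is slightly more careful than the paper's (which runs a path induction establishing only $\mx{|\Agents|-1}<1$): your absorption property (a) and the $(1-|d|)|d|\le\frac{1}{4}$ bound in (b) handle both extremes simultaneously, make explicit that interiority persists once attained, and rule out an extreme agent being kicked to the opposite extreme---points the paper's argument leaves implicit.
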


We conclude this section by emphasizing that belief convergence is not guaranteed
in non strongly-connected graphs.
Fig.~\ref{fig:weakly-graph} from the vaccine example shows such
a graph where neither belief convergence nor zero-polarization is 
obtained. 

\section{Conditions for Polarization}
\label{sec:specific-cases}

We now use concepts from flow networks to  identify insightful
necessary conditions for polarization never disappearing.  Understanding the conditions when polarization
\emph{does not} disappear under confirmation bias is one of the main contributions of this paper.

\subsubsection{Balanced Influence: Circulations}

The following notion is inspired by the \emph{circulation problem} for directed graphs (or flow network) \cite{Diestel:17}.  
Given a graph $G=(V,E)$ and a function $c{:}E{\to}\reals$ (called \emph{capacity}), the problem involves finding a function  $f{:}E{\to}\reals$ (called \emph{flow}) such that: (1) $f(e){\leq}c(e)$ for each $e{\in} E$; and (2) $\sum_{(v,w){\in} E}f(v,w)=\sum_{(w,v){\in}E}f(w,v)$ for all $v{\in}V$. If such an $f$ exists it  is called a \emph{circulation} for $G$ and $c$. 

Thinking of flow as influence, the second condition, called \emph{flow conservation}, corresponds to requiring that each agent influences others as much as is influenced by them.

\begin{definition}[Balanced Influence] \label{def:circulation} We say that  $\Inter$ is \emph{balanced} (or a \emph{circulation}) if every $\agent{i} \in \Agents$ satisfies the constraint $\sum_{\agent{j}{\in}\Agents} \Ifun{i}{j}{=}\sum_{\agent{j}{\in}\Agents} \Ifun{j}{i}.$
\end{definition}

Cliques and circular graphs, where all (non-self) influence values are equal, are balanced (see Fig.~\ref{fig:interaction-graphs-circular}). 
The graph of our vaccine example (Fig.~\ref{running-example:figure}) is a circulation that it is neither a clique nor a circular graph.
Clearly, influence graph $\Inter$ is balanced if it is a solution to a circulation problem for some $G{=}(\Agents,\Agents{\times}\Agents)$ with capacity $c{:}\Agents{\times}\Agents{\to}[0,1].$  

Next we use a fundamental property from flow networks describing flow conservation for graph cuts \cite{Diestel:17}. Interpreted in our case it says that any group of agents $A{\subseteq}\Agents$ influences other groups as much as they influence $A$. 

\begin{restatable}[Group Influence Conservation]{proposition}{respropgroupinfluenceconservation}
\label{prop:group-influence-conservation}  
Let $\Inter$ be balanced and $\{A,B\}$ be a partition of $\Agents$. Then 
$\sum_{i \in A}\sum_{j \in B} \Ifun{i}{j} =\sum_{i \in A}\sum_{j \in B} \Ifun{j}{i}$. 
\end{restatable}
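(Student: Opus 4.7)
The statement is the standard flow-conservation-across-a-cut property from flow networks, and I would derive it directly from the agent-wise balance condition of Def.~\ref{def:circulation} by summing over one side of the cut and cancelling the ``internal'' contributions. No analytical machinery is needed; the proof is essentially a double-counting argument.

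\textbf{Step 1: Aggregate the balance condition over $A$.} Since $\Inter$ is balanced, for every $i \in \Agents$ we have $\sum_{j \in \Agents} \Ifun{i}{j} = \sum_{j \in \Agents} \Ifun{j}{i}$. Summing this identity over all $i \in A$ yields
\[
\sum_{i \in A}\sum_{j \in \Agents} \Ifun{i}{j} \;=\; \sum_{i \in A}\sum_{j \in \Agents} \Ifun{j}{i}.
\]

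\textbf{Step 2: Split each inner sum along the cut $\{A,B\}$.} Because $\{A,B\}$ partitions $\Agents$, splitting the index $j$ gives
\[
\sum_{i \in A}\sum_{j \in A} \Ifun{i}{j} \;+\; \sum_{i \in A}\sum_{j \in B} \Ifun{i}{j}
\;=\;
\sum_{i \in A}\sum_{j \in A} \Ifun{j}{i} \;+\; \sum_{i \in A}\sum_{j \in B} \Ifun{j}{i}.
\]

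\textbf{Step 3: Cancel the intra-$A$ terms.} The two double sums ranging over $A \times A$ are equal, since each is obtained from the other by the bijective renaming of dummy variables $i \leftrightarrow j$: formally, $\sum_{i \in A}\sum_{j \in A} \Ifun{i}{j} = \sum_{i \in A}\sum_{j \in A} \Ifun{j}{i}$. Subtracting this common quantity from both sides yields exactly
\[
\sum_{i \in A}\sum_{j \in B} \Ifun{i}{j} \;=\; \sum_{i \in A}\sum_{j \in B} \Ifun{j}{i},
\]
as required.

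\textbf{Expected difficulty.} There is essentially no obstacle here: the argument is a one-line cut-conservation fact, and the only thing worth being careful about is bookkeeping the self-loops $\Ifun{i}{i}$ (which appear on both sides of the intra-$A$ term and hence cancel harmlessly) so that the assumption $\Ifun{i}{i}{=}1$ does not interfere with the cancellation in Step~3.
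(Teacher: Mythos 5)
Your proof is correct. The paper itself gives no argument at all here---it simply declares the proposition an ``immediate consequence of Prop.~6.1.1 in Diestel''---and the standard proof of that cited fact is precisely your double-counting argument: sum the node-wise balance condition over $A$, split each inner sum along the cut, and cancel the common intra-$A$ term $\sum_{i \in A}\sum_{j \in A} \Ifun{i}{j}$ via the dummy-variable swap. So you have, in effect, correctly inlined the proof the paper outsources; your remark that the self-loops $\Ifun{i}{i}$ sit inside the cancelled $A \times A$ block and cause no trouble is also accurate.
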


We now define \emph{weakly connected influence}. 
Recall that an undirected graph is \emph{connected} if there is path between each pair of nodes. 

\begin{definition}[Weakly Connected Influence]\label{def:weakly-connected}
Given an influence graph $\Inter$, define the undirected graph $G_{\Inter}{=}(\Agents,E)$ where $\{i,j \}{\in}E$ if and only if $\Ifun{i}{j}{>}0$ or $\Ifun{j}{i}{>}0$.
An influence graph $\Inter$ is called \emph{weakly connected} if  the undirected graph $G_{\Inter}$ is connected.
\end{definition}

\todo{Mário: I added a ``clearpage'' command here, which is not desirable, but it forces latex to do the right thing and save
us space for the conclusion. I don't know why, but this magical 
``clearpage'' makes it all work for now! We can (and should!) get rid of it in the camera-ready version.}
Weakly connected influence relaxes its strongly connected counterpart. 
However, every balanced, weakly connected influence is strongly connected as implied by the next lemma. Intuitively, circulation flows never leaves strongly connected components. 

\begin{restatable}[]{lemma}{resprocirculationpath}
\label{prop:circulation-path}
If $\Inter$ is balanced and $\Ifun{i}{j}{>}0$ then $\Path{j}{i}$.
\end{restatable}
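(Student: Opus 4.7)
The plan is to argue by contradiction using reachability together with the flow-conservation statement of Proposition~\ref{prop:group-influence-conservation}. The trivial case $i{=}j$ is immediate (the size-$0$ path consisting of $\agent{i}$ alone already witnesses $\Path{i}{i}$), so I may assume $i{\neq}j$. Define the \emph{reachable set}
\[
R \;=\; \{\agent{k}\in\Agents \,:\, \agent{k}{=}\agent{j} \text{ or } \infl{\agent{j}}{\agent{k}}\}.
\]
Clearly $\agent{j}\in R$, so $R$ is nonempty. The goal is to show $\agent{i}\in R$, which is exactly $\Path{j}{i}$.

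Suppose for contradiction that $\agent{i}\notin R$. I would first establish the key structural fact that \emph{no agent in $R$ has positive direct influence on any agent outside $R$}: if some $\agent{k}\in R$ satisfied $\Ifun{k}{l}{>}0$ with $\agent{l}\notin R$, then a shortest witnessing path $\linfl{\agent{j}}{}{p}{\agent{k}}$ (or the single-agent path if $\agent{k}{=}\agent{j}$) could be extended by the edge $\dinfl{\agent{k}}{\agent{l}}$; the extension keeps all agents distinct, because $\agent{l}\notin R$ implies $\agent{l}$ does not occur on $p$ (any prefix of $p$ would itself witness $\agent{l}\in R$). Hence $\infl{\agent{j}}{\agent{l}}$, contradicting $\agent{l}\notin R$. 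Consequently,
\[
\sum_{\agent{k}\in R}\sum_{\agent{l}\in\Agents\setminus R}\Ifun{k}{l} \;=\; 0.
\]

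Now $\{R,\Agents\setminus R\}$ is a genuine partition of $\Agents$: both parts are nonempty because $\agent{j}\in R$ and $\agent{i}\in\Agents\setminus R$ (using $\agent{i}\neq\agent{j}$). Applying Proposition~\ref{prop:group-influence-conservation} to this partition forces the reverse cross-sum to vanish as well:
\[
\sum_{\agent{k}\in R}\sum_{\agent{l}\in\Agents\setminus R}\Ifun{l}{k} \;=\; 0.
\]
But this sum includes the term $\Ifun{i}{j}$, which is strictly positive by hypothesis, a contradiction. Therefore $\agent{i}\in R$, and hence $\Path{j}{i}$ as required.

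The main (mild) obstacle is the path-extension step: the influence-path definition requires distinct agents along the sequence, so one must argue that the candidate extension through a supposed "leaky" edge does not reuse a vertex. Choosing a shortest path to the boundary agent handles this cleanly, since any repeated vertex would place the outside endpoint $\agent{l}$ back inside $R$, which we are assuming it is not. Everything else is a direct application of the cut form of circulation conservation already isolated in Proposition~\ref{prop:group-influence-conservation}.
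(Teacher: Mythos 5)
Your proof is correct and follows essentially the same route as the paper's: define the set of agents reachable from $j$, observe that no influence leaves it, and apply the cut form of flow conservation (Prop.~\ref{prop:group-influence-conservation}) to contradict $\Ifun{i}{j}>0$. Your treatment is in fact slightly more careful than the paper's, since you explicitly handle the distinctness requirement on influence paths and the degenerate case $i=j$.
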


\subsubsection{Conditions for Polarization}
We have now all elements to identify conditions for permanent polarization. The  convergence for strongly connected graphs (Th.~\ref{theorem:cb-geberal-scc-convergence}), the polarization at the limit lemma (Lem.~\ref{pol-at-limit}), and Lem.~\ref{prop:circulation-path} yield the following noteworthy result.

\begin{restatable}[Conditions for Polarization]{theorem}{respolnonzero}
\label{pol-non-zero}
Suppose that $\lim_{t \to \infty}\Pfun{\Blft{t}}{\neq}0.$ Then either: 
(1) $\Inter$ is not balanced;
(2) $\Inter$ is not weakly connected;
(3) $\Blft{0}$ is radical; or
(4) for some borderline value $v$, $\lim_{t \to \infty} \Bfun{i}{t}{=}v$ for each $\agent{i}{\in}\Agents$ . 
\end{restatable}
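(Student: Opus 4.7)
I will prove the contrapositive: assume that none of conditions (1)--(4) holds, and conclude that $\lim_{t\to\infty}\Pfun{\Blft{t}}{=}0$, contradicting the hypothesis. So suppose simultaneously that $\Inter$ is balanced, $\Inter$ is weakly connected, $\Blft{0}$ is not radical, and that it is \emph{not} the case that all agents' beliefs converge to a common borderline value.

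The first step is to upgrade weak connectivity to strong connectivity using balancedness. Given any two agents $\agent{i},\agent{j}{\in}\Agents$, weak connectivity yields an undirected path $\agent{i}{=}\agent{a_0},\agent{a_1},\ldots,\agent{a_n}{=}\agent{j}$ in $G_{\Inter}$ (Def.~\ref{def:weakly-connected}). For each consecutive pair, either $\Ifun{a_k}{a_{k+1}}{>}0$, giving a direct edge $\dinfl{\agent{a_k}}{\agent{a_{k+1}}}$, or $\Ifun{a_{k+1}}{a_k}{>}0$, in which case Lem.~\ref{prop:circulation-path} applied to the balanced $\Inter$ yields an influence path $\Path{a_k}{a_{k+1}}$. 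Concatenating these segments (and, if needed, deleting repeated vertices to keep the sequence simple per Def.~\ref{def:influence-path}) gives $\Path{i}{j}$. Hence $\Inter$ is strongly connected.

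The second step invokes Th.~\ref{theorem:cb-geberal-scc-convergence}: since $\Inter$ is strongly connected and $\Blft{0}$ is not radical, there exists $v{\in}[0,1]$ such that $\lim_{t\to\infty}\Bfun{i}{t}{=}v$ for every $\agent{i}{\in}\Agents$. By the failure of condition~(4), this common limit $v$ is not a borderline point of $D_k$. Then Lem.~\ref{pol-at-limit} (Zero Limit Polarization) applies directly and gives $\lim_{t\to\infty}\Pfun{\Blft{t}}{=}0$, contradicting the assumption $\lim_{t\to\infty}\Pfun{\Blft{t}}{\neq}0$. This completes the contrapositive and hence the theorem.

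The only non-routine step is the passage from balanced + weakly connected to strongly connected, since the other ingredients (convergence under strong connectivity and continuity of $\Pol$ away from borderline points) are already packaged as Th.~\ref{theorem:cb-geberal-scc-convergence} and Lem.~\ref{pol-at-limit}. The mild subtlety there is ensuring that the concatenated sequence qualifies as an \emph{influence path} under Def.~\ref{def:influence-path}, which requires distinct agents; this is handled by shortcutting any repeated vertex, which preserves the existence of a directed path and is all that $\infl{i}{j}$ demands. Everything else is a direct application of the previously established results.
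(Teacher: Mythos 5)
Your proof is correct and follows essentially the same route as the paper: the paper likewise argues that balancedness plus weak connectivity yields strong connectivity via Lem.~\ref{prop:circulation-path}, and then concludes by combining Th.~\ref{theorem:cb-geberal-scc-convergence} with Lem.~\ref{pol-at-limit}. Your write-up merely makes explicit the concatenation-and-shortcutting argument that the paper leaves implicit.
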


%
%

Hence, at least one of the four conditions is necessary for the persistence of polarization. 
If (1) then there must be at least one agent that influences more than what he is influenced (or vice versa). This is illustrated in Fig.~\ref{fig:weakly-graph} from the vaccine example, where Agent 2 is such an agent. 
If (2) then there must be isolated subgroups of agents; e.g., two isolated strongly-connected components the members of the same component will achieve consensus but the consensus values of the two components may be very different. This is illustrated in the fourth row of Fig.~\ref{fig:comparing-num-bins}. Condition (3) can be ruled out if there is an agent that is not radical, like in all of our examples and simulations. As already discussed, (4) depends on the underlying discretization $D_k$ (e.g., assuming equal-length bins if $v$ is  borderline in $D_k$ it is not borderline in $D_{k+1}$, see Fig.~\ref{fig:borderline-example}.).

\subsubsection{Reciprocal and Regular Circulations}
\label{circulation:section}
The notion of circulation allowed us to identify potential causes of polarization. In this section we will also use it to identify meaningful topologies whose symmetry can help us predict the exact belief value of convergence. 

A \emph{reciprocal} influence graph is a 
circulation where the influence of $i$ over $j$ is the
same as that of $j$ over $i$, i.e, $\Ifun{i}{j}{=} \Ifun{j}{i}$.  Also a graph is (\emph{in-degree}) 
\emph{regular} if the in-degree 
%
of each nodes is the same;  i.e., for all $i,j{\in}\Agents$,
$|\Agents_{\agent{i}}|{=}|\Agents_{\agent{j}}|$.  

As examples of regular and reciprocal graphs, consider a graph $\Inter$  where all (non-self) influence values are equal. If $\Inter$ is \emph{circular} then it is a regular circulation, and if $\Inter$ 
is a \emph{clique} then it is a reciprocal regular circulation. Also we can modify slightly our vaccine example to obtain a regular reciprocal circulation as shown in Fig.~\ref{fig:circulation-reg-rec}. 

\begin{figure}[tb]
  \centering
    \begin{subfigure}{.45\textwidth}
      \centering
      \includegraphics[width=0.7\linewidth]{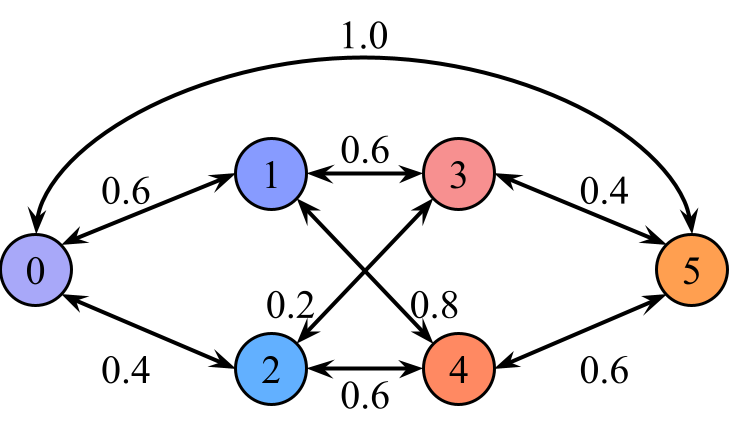}
    \caption{Regular and reciprocal influence.} 
        \label{fig:circulation-reg-rec-a}
    \end{subfigure}
    \qquad
    \begin{subfigure}{.45\textwidth}
      \centering
      \includegraphics[width=0.7\linewidth]{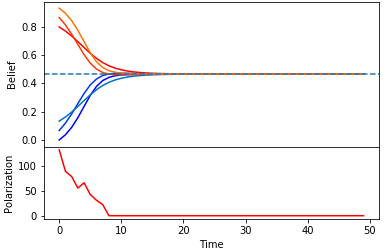}
        \caption{Beliefs and pol. for Fig.~\ref{fig:circulation-reg-rec-a}. 
        }    
        \label{fig:circulation-reg-rec-b}
    \end{subfigure}
    \caption{Influence and evolution of beliefs and polar.}
    \label{fig:circulation-reg-rec}
\end{figure}

The importance of regularity and reciprocity of influence graphs is that their symmetry is sufficient to the determine the exact value all the agents converge to under 
confirmation bias: \emph{the average of initial beliefs}.  Furthermore, under classical update (see Rmk.~\ref{authority:bias:remark}), we can drop reciprocity and obtain the same result. The result is proven using Lem.~\ref{prop:circulation-path}, Th.~\ref{theorem:cb-geberal-scc-convergence}, Cor.~\ref{cor:degroot},  the squeeze theorem and by showing that
$\sum_{\agent{i}{\in}\Agents} \Bfun{i}{t}{=}\sum_{\agent{i}{\in} \Agents} \Bfun{i}{t+1}$ using symmetries derived from reciprocity, regularity, and the fact that $\CBfun{i}{j}{t}{=}\CBfun{j}{i}{t}.$ 

\begin{restatable}[Consensus Value]{theorem}{rescorcirculationconvergence}
\label{cor:circulation-convergence} 
Suppose that $\Inter$ is regular and weakly connected. If $\Inter$ is reciprocal and the belief update is confirmation-bias, or if the influence graph $\Inter$ is a circulation and the belief update is classical, then  $\lim_{t \to \infty} \Bfun{i}{t} = \nicefrac{1}{|\Agents|}\sum_{\agent{j} \in \Agents}\Bfun{j}{0}$ for every $i{\in}\Agents.$
\end{restatable}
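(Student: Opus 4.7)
The plan is to prove the theorem in two stages: first, establish that all agents' beliefs converge to a common limit $L$; second, show that the total belief $\sum_i \Bfun{i}{t}$ is invariant over time, which forces $L$ to equal the initial average.

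For the first stage, observe that a reciprocal influence graph is automatically a circulation, since $\Ifun{i}{j}{=}\Ifun{j}{i}$ yields $\sum_j \Ifun{i}{j}{=}\sum_j \Ifun{j}{i}$ trivially. Hence in both cases $\Inter$ is a weakly connected circulation, and by Lem.~\ref{prop:circulation-path} it is strongly connected. For the confirmation-bias update, Theorem~\ref{theorem:cb-geberal-scc-convergence}---together with Rmk.~\ref{rmk:assumption}, which rules out the mixed $\{0,1\}$ case and thus leaves only constant-radical configurations, for which convergence to the initial (constant) average is trivial---gives a common limit $L$. For the classical update, Cor.~\ref{cor:degroot} plays the analogous role.

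For the second stage, regularity gives $|\Agents_{\agent{i}}|{=}d$ independent of $i$, so the factor $1/|\Agents_{\agent{i}}|$ can be pulled out of the outer sum. Summing the confirmation-bias update over all agents,
\begin{equation*}
\sum_i \Bfun{i}{t+1} - \sum_i \Bfun{i}{t} \;=\; \tfrac{1}{d} \sum_{i,j} \CBfun{i}{j}{t}\,\Ifun{j}{i}\,(\Bfun{j}{t}-\Bfun{i}{t}) \;=:\; \tfrac{S}{d}.
\end{equation*}
Renaming $i \leftrightarrow j$ in $S$, then applying the symmetry $\CBfun{i}{j}{t}{=}1{-}|\Bfun{j}{t}{-}\Bfun{i}{t}|{=}\CBfun{j}{i}{t}$ together with reciprocity $\Ifun{i}{j}{=}\Ifun{j}{i}$, I obtain $S = -S$, so $S = 0$. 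For the classical update, the same index swap gives
\begin{equation*}
\sum_i \Bfun{i}{t+1} - \sum_i \Bfun{i}{t} \;=\; \tfrac{1}{d} \sum_i \Bfun{i}{t} \Bigl( \sum_j \Ifun{i}{j} - \sum_j \Ifun{j}{i} \Bigr),
\end{equation*}
which vanishes by the circulation property. Hence $\sum_i \Bfun{i}{t} = \sum_i \Bfun{i}{0}$ for all $t$; taking $t \to \infty$, and using that a finite sum commutes with limits together with stage one, yields $|\Agents|\,L = \sum_i \Bfun{i}{0}$, i.e., $L = \nicefrac{1}{|\Agents|}\sum_i \Bfun{i}{0}$. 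I expect the main subtlety to be the cancellation in the confirmation-bias sum, where one must combine reciprocity of $\Inter$ \emph{and} the symmetry of $\CBfun{i}{j}{t}$ in tandem; regularity is essential here precisely because without it the per-agent prefactor $1/|\Agents_{\agent{i}}|$ would not be uniform, and the $i \leftrightarrow j$ swap would fail to yield the needed sign flip.
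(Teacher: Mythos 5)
Your proof is correct and follows essentially the same route the paper sketches: strong connectivity via Lemma~\ref{prop:circulation-path}, a common limit via Theorem~\ref{theorem:cb-geberal-scc-convergence} (resp.\ Corollary~\ref{cor:degroot}), and invariance of $\sum_{i}\Bfun{i}{t}$ obtained from the $i\leftrightarrow j$ swap using regularity, reciprocity, and the symmetry $\CBfun{i}{j}{t}{=}\CBfun{j}{i}{t}$ (resp.\ the circulation identity). Your explicit handling of the radical corner case via Remark~\ref{rmk:assumption} is a welcome detail the paper leaves implicit.
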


\section{Comparison to DeGroot's model}
\label{sec:degroot}
DeGroot proposed a very influential model, closely related to our work, to reason about learning and consensus  in multi-agent systems~\cite{degroot}, in which beliefs are updated by a constant stochastic matrix at each time step. 
%
More specifically, consider 
a group $\{1,2,\ldots,k\}$ of $k$ agents, s.t. each agent
$i$ holds an initial (real-valued) opinion $F_{i}^{0}$ on a given proposition
of interest. 
Let $T_{i,j}$ be a non-negative weight that agent $i$ gives to agent $j$'s opinion, 
s.t. $\sum_{j=1}^k T_{i,j}{=}1$. 
DeGroot's model posits that an agent $i$'s opinion $F_{i}^{t}$ at any time 
$t{\geq}1$ is updated as
$F_{i}^{t}{=}\sum_{j{=}1}^k T_{i,j} F_{i}^{t-1}$.
Letting $F^{t}$ be a vector containing all agents' opinions at time $t$,
the overall update can be computed as $F^{t{+}1}{=}T F^{t}$,
where $T{=}\{T_{i,j}\}$ is a stochastic matrix.
This means that the $t$-th configuration (for $t{\geq}1$) is 
related to the initial one by $F^{t}{=}T^{t}F^{0}$, which is 
a property thoroughly used to derive results in the model.

When we use classical update  
(as 
in Remark~\ref{authority:bias:remark}), our model reduces to DeGroot's 
via the transformation
$F_{i}^{0}{=}\Bfun{i}{0}$, 
and
$T_{i,j}{=}\nicefrac{1}{|\Agents_i|} \ \Ifun{j}{i}$ if $i{\neq}j$,
or 
$T_{i,j}{=}1{-}\nicefrac{1}{|\Agents_i|}\sum_{j{\in}\Agents_i} \Ifun{j}{i}$ otherwise.
Notice that $T_{i,j}{\leq}1$ for all $i$ and $j$, and, by construction, $\sum_{j{=}1}^k T_{i,j}{=}1$ for all $i$. 
The following result is an immediate consequence of this reduction.
\begin{restatable}{corollary}{rescordegroot}
\label{cor:degroot}
    In a strongly connected influence graph $\Ifun{}{}$ and under the classical update function, for all $i, j{\in}\Agents$, $\lim_{t{\to}\infty} \Bfun{i}{t}{=}\lim_{t{\to}\infty} \Bfun{j}{t}$.
\end{restatable}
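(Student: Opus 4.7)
The plan is to reduce the corollary to the classical DeGroot convergence theorem using exactly the translation spelled out immediately before the statement. I would proceed in three steps.

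First, I would verify that the induced matrix $T = \{T_{i,j}\}$ is a well-defined row-stochastic matrix with non-negative entries. Non-negativity of off-diagonal entries is immediate from $T_{i,j} = \Ifun{j}{i}/|\Agents_i| \geq 0$, and rewriting the diagonal as $T_{i,i} = (1/|\Agents_i|) \sum_{j \in \Agents_i \setminus \{i\}} (1 - \Ifun{j}{i})$ shows $T_{i,i} \geq 0$ since each $\Ifun{j}{i} \leq 1$. Row-stochasticity holds by construction and is already observed in the excerpt.

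Second, I would argue that strong connectivity of $\Ifun{}{}$ implies irreducibility of the Markov chain on $\Agents$ with transition matrix $T$. For any $i \neq j$, strong connectivity yields an influence path $i = i_0 \to i_1 \to \cdots \to i_n = j$ in $\Ifun{}{}$, and each positive direct influence $\Ifun{i_{k-1}}{i_k}$ translates to a positive entry $T_{i_k, i_{k-1}}$, so $(T^n)_{j,i} > 0$. Aperiodicity follows because the self-influence assumption $\Ifun{i}{i} = 1$ together with strong connectivity typically makes some diagonal entry $T_{i,i}$ strictly positive (it suffices that at least one non-self in-neighbor of some agent has influence strictly less than $1$), giving the chain period $1$.

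Third, I would invoke the standard DeGroot convergence theorem for irreducible aperiodic row-stochastic matrices on a finite state space: $T^t$ converges to $\mathbf{1}\pi^{\top}$, where $\pi$ is the unique stationary distribution of $T$. Since $F^t = T^t F^0$, every component $F_i^t$ converges to the common limit $\pi^{\top} F^0$. Unrolling the translation $F_i^t = \Bfun{i}{t}$ then delivers exactly the claim. The main potential obstacle is the pathological case where every non-self in-influence equals $1$, making $T_{i,i} = 0$ and threatening aperiodicity; this edge case can be sidestepped either by imposing a mild non-degeneracy hypothesis on $\Inter$, or by replacing the aperiodicity step with a Cesàro-averaging/ergodic argument which still concludes consensus from irreducibility of $T$ alone.
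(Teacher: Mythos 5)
Your proof follows essentially the same route as the paper's: translate the classical update into a row-stochastic matrix $T$, get irreducibility from strong connectivity, get aperiodicity from a positive diagonal entry, and invoke DeGroot's convergence theorem. The one place you hedge unnecessarily is aperiodicity. The ``pathological case'' you worry about cannot occur: in the correction term $\nicefrac{1}{|\Agents_i|}\sum_{j\in\Agents_i}\Ifun{j}{i}(\Bfun{j}{t}-\Bfun{i}{t})$ the summand for $j=i$ vanishes identically, so the true diagonal entry is $T_{i,i}=1-\nicefrac{1}{|\Agents_i|}\sum_{j\in\Agents_i\setminus\{i\}}\Ifun{j}{i}\geq 1-\nicefrac{(|\Agents_i|-1)}{|\Agents_i|}=\nicefrac{1}{|\Agents_i|}>0$ for \emph{every} agent, even when all non-self influences equal $1$ (your rewriting of $T_{i,i}$ drops this extra $\nicefrac{1}{|\Agents_i|}$, which is also why it appears compatible with $T_{i,i}=0$; note that with your formula the rows would not sum to $1$). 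This matters because your proposed fallback is not actually sound: for an irreducible but periodic $T$, Ces\`aro averages of $T^tF^0$ converge but $T^tF^0$ itself need not, so irreducibility alone would not give the stated limit $\lim_{t\to\infty}\Bfun{i}{t}$. Fortunately no fallback or extra non-degeneracy hypothesis is needed; with the corrected diagonal bound your argument coincides with the paper's proof, which simply observes that every state has a self-loop and applies DeGroot's Theorem~2.
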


Unlike its classical counterpart, however, the confirmation-bias update 
(Def.~\ref{def:confirmation-bias}) does not have an immediate correspondence with
DeGroot's model.
Indeed, this update is not linear due the confirmation-bias factor $\CBfun{i}{j}{t}{=}1{-}|\Bfun{j}{t}{-}\Bfun{i}{t}|$.
This means that in our model there is no immediate analogue of the 
relation among arbitrary configurations and the initial one as the 
relation in DeGroot's model (i.e., $F^{t}{=}T^{t}F^{0}$).
Therefore, proof techniques usually used in DeGroot's model
(e.g., based on Markov properties) are not immediately applicable 
to our model.
In this sense our model is an extension of DeGroot's, and we need
to employ different proof techniques to obtain our results.

\section{Conclusions and Other Related Work}
\label{sec:conclusion}
We proposed a model for polarization and belief evolution for multi-agent systems under confirmation-bias.
We showed that whenever all agents can directly or indirectly influence each other, their beliefs always converge, and so does polarization as long as the convergence value is not a 
borderline point.
We also identified necessary conditions for polarization not to disappear, and the convergence value for some important network topologies.
As future work we intend to 
extend our model to model evolution of beliefs and 
measure polarization in situations in which agents
hold opinions about multiple propositions of interest.

\paragraph{Related Work.}
As mentioned in the introduction and discussed in detail in Section \ref{sec:degroot}, the closest related work is on DeGroot models for social learning \cite{degroot}. We summarize some other relevant approaches put into perspective the novelty of our approach. 

\noindent\textbf{Polarization} Polarization was originally studied as a psychological phenomenon in \cite{M76},
and was first rigorously and quantitatively defined by economists Esteban and Ray \cite{Esteban:94:Econometrica}. Their measure of polarization, discussed  in  Section~\ref{sec:model}, is influential, and we adopt it in this paper.
Li et al.\cite{li}, and later Proskurnikov et al. \cite{proskurnikov} modeled consensus and polarization in social networks. Like much other work, they treat polarization simply as the lack of consensus and focus on  when and under what conditions a population reaches consensus.
%
%
Elder's work \cite{alexis} focuses on methods to avoid polarization, without using a quantitative definition of polarization. 
 \cite{Guerra} measures polarization but purely as a function of network topology, rather than taking agents' quantitative beliefs and opinions into account, in agreement with some of our results.

\noindent\textbf{Formal Models} S{\^\i}rbu et al. \cite{sirbu}
use a model  that updates probabilistically to investigate the effects of algorithmic bias on polarization by counting the number of opinion clusters, interpreting a single opinion cluster as consensus.  
Leskovec et al. \cite{gargiulo} simulate social networks and observe group formation over time. 

The Degroot models developed in \cite{degroot} and used in \cite{naive} are closest to ours. Rather than examining polarization and opinions, this work is concerned with the network topology conditions under which agents with noisy data about an objective fact converge to an accurate consensus, close to the true state of the world. As already discussed the basic DeGroot models do not include confirmation bias, however \cite{sikder,mao,mf,hk, robust} all generalize DeGroot-like models to include functions that can be thought of as modelling confirmation bias in different ways, but with either no measure of polarization or a simpler measure than the one we use. \cite{moreau} discusses DeGroot models where the influences change over time, and \cite{survey} presents results about generalizations of these models, concerned more with consensus than with polarization.

\noindent\textbf{Logic-based approaches} Liu et al. \cite{liu} use ideas from doxastic and dynamic epistemic logics to
qualitatively model influence and belief change in social networks. 
%
%
Seligman et al. \cite{fblogic,facebook} introduce a basic ``Facebook logic.'' This logic is non-quantitative, but its interesting point is that an agent's possible worlds are different social networks. This is a promising approach to formal modeling of  epistemic issues in social networks.  Christoff \cite{zoe} extends facebook logic and develops several non-quantitative logics for social networks, concerned with problems related to polarization, such as information cascades.
Young Pederson et al. \cite{myp, myp2, myp3} develop a logic of polarization, in terms of positive and negeative links between agents, rather than in terms of their quantitative beliefs. 
Hunter \cite{hunter} introduces a logic of belief updates over social networks where closer agents in the social network are more trusted and thus more influential. While beliefs in this logic are non-quantitative, there is a quantitative notion of influence 
between users.



\textbf{Other related work} The seminal paper Huberman et al. \cite{huberman} is about determining which friends or followers in a user's network have the most influence on the user. Although this paper does not quantify influence between users, it does address an important question to our project. Similarly, \cite{demarzo} focuses on finding most influential agents. The work on highly influential agents is relevant to our finding that such agents can maintain a network's polarization over time.  

\bibliographystyle{splncs04}
\bibliography{polar}

\version{}{
\newpage
\appendix
\section{Axioms for Esteban-Ray polarization measure}
\label{sec:polar-axioms}

The Esteban-Ray polarization measure used in this paper was developed as the only function (up to constants $\alpha$ and $K$) satisfying all of the following conditions and axioms~\cite{Esteban:94:Econometrica}:
\begin{description}
   \item[Condition H:] 
    The ranking induced by the polarization measure over two
    distributions is invariant w.r.t. the size of the population:~\footnote{This is why we can assume w.l.o.g. that the distribution is a probability distribution.}
    $$
     \PfunER{\pi, y} \geq \PfunER{\pi', y'} \quad \rightarrow \quad \forall \lambda > 0, \,\, \PfunER{\lambda\pi, y} \geq \PfunER{\lambda\pi', y'}~.
    $$
    \item[Axiom 1:] 
    Consider three levels of belief $p, q, r\in[0,1]$ 
    such that the same proportion of the population holds beliefs $q$ and $r$, and
    a significantly higher proportion of the population holds belief $p$.
    If the groups of agents that hold beliefs $q$ and $r$ reach a consensus and
    agree on an \qm{average} belief $\nicefrac{(q+r)}{2}$, then the social network becomes
    more polarized.
    \item[Axiom 2:] 
    Consider three levels of belief $p, q, r \in [0,1]$,
    such that $q$ is at least as close to $r$ as it is to $p$, and
    $p>r$.
    If only small variations on $q$ are permitted, the direction that brings it closer 
    to the nearer and smaller opinion ($r$) should increase polarization.
    \item[Axiom 3:] 
    Consider three levels of belief $p, q, r \in [0,1]$, s.t. $p<q<r$ and there 
    is a non-zero proportion of the population holding belief $q$.
    If the proportion of the population that holds belief $q$ is equally split into
    holding beliefs $q$ and $r$, then polarization increases.
\end{description}

\section{Mathematical details of simulations}
\label{sec:simulations-specification}

In this section we specify the mathematical details of the simulations  in Section~\ref{sec:simulations}.

\subsubsection{Initial belief configurations:}
We consider the following initial belief configurations, depicted in Fig.~\ref{fig:initial-belief-configurations}:

\begin{itemize}
    \item A \emph{uniform} belief configuration representing a set of agents whose beliefs are as varied as possible, all equally spaced in the interval $[0, 1]$:
    \begin{align*}
        \Bfun{i}{0} = \nicefrac{i}{(n{-}1)}~.
    \end{align*}
    
    \item A \emph{mildly polarized} belief configuration with agents evenly split into two groups with moderately dissimilar inter-group beliefs compared to intra-group beliefs: 
    \begin{align*}
        \Bfun{i}{0} = 
        \begin{cases}
            \nicefrac{0.2 i}{\ceil{\nicefrac{n}{2}}} + 0.2, & \text{if $i<\ceil{\nicefrac{n}{2}}$,} \\
            \nicefrac{0.2 (i{-}\ceil{\nicefrac{n}{2}})}{(n{-}\ceil{\nicefrac{n}{2}})} + 0.6 & \text{otherwise.}
        \end{cases}
    \end{align*}
    
    \item An \emph{extremely polarized} belief configuration
    representing a situation in which half of the 
    agents strongly believe the proposition, whereas 
    half strongly disbelieve it:
    \begin{align*}
        \Bfun{i}{0} = 
        \begin{cases}
            \nicefrac{0.2 i}{\ceil{\nicefrac{n}{2}}}, & \text{if $i<\ceil{\nicefrac{n}{2}}$,} \\
            \nicefrac{0.2 (i{-}\ceil{\nicefrac{n}{2}})}{(n{-}\ceil{\nicefrac{n}{2}})} + 0.8, & \text{otherwise.}
        \end{cases}
    \end{align*}
    
    \item A \emph{tripolar} configuration with 
    agents divided into three 
    groups:
    \begin{align*}
        \Bfun{i}{0} =
        \begin{cases}
            \nicefrac{0.2 i}{\lfloor\nicefrac{n}{3}\rfloor}, & \text{if $i < \lfloor{\nicefrac{n}{3}}\rfloor$,} \\
            \nicefrac{0.2 (i{-}\lfloor{\nicefrac{n}{3}}\rfloor)}{(\ceil{\nicefrac{2n}{3}}{-}\lfloor{\nicefrac{n}{3}}\rfloor)} + 0.4, & \text{if $\lfloor{\nicefrac{n}{3}}\rfloor \leq i < \ceil{\nicefrac{2n}{3}}$,} \\
            \nicefrac{0.2 (i{-}\ceil{\nicefrac{2n}{3}})}{(n{-}\ceil{\nicefrac{2n}{3}})} + 0.8, & \text{otherwise.}
        \end{cases}
    \end{align*}
\end{itemize}


\subsubsection{Influence graphs:}
We consider the following influence graphs, depicted in Fig.~\ref{fig:interaction-graphs}:

\begin{itemize}
    \item A \emph{$C$-clique} influence graph $\Interclique$, 
    in which each agent influences every other with constant 
    value $C=0.5$: 
    \begin{align*}
        \Ifunclique{\agent{i}}{\agent{j}} = 0.5~.
    \end{align*}
    This represents the particular case of a social network in
    which all agents interact among themselves, and are all immune
    to authority bias.
    
    \item A \emph{circular} influence graph $\Intercircular$ 
    representing a social network in which agents can be organized
    in a circle in such a way each agent is only influenced by its 
    predecessor and only influences its successor:
    \begin{align*}
        \Ifuncircular{\agent{i}}{\agent{j}} = 
        \begin{cases}
            0.5, & \text{if (i{+}1)\,\text{mod}\,n = j,} \\
            0, & \text{otherwise.}
        \end{cases}
    \end{align*}
    This is a simple instance of a balanced graph (in which
    each agent's influence on others is as high as the influence received, 
    as in Def.~\ref{def:circulation} ahead), which is a pattern commonly 
    encountered in some sub-networks.
    
    \item A \emph{disconnected} influence graph $\Interdisconnected$
    representing a social network sharply divided into two groups in such a way that agents within the same group 
    can considerably influence each other, but not at all
    agents in the other group:
    \begin{align*}
        \Ifundisconnected{\agent{i}}{\agent{j}} = 
        \begin{cases}
            0.5, & \text{if $\agent{i},\agent{j}$ are both
     ${<} \ceil{\nicefrac{n}{2}}$ or both ${\geq} \ceil{\nicefrac{n}{2}}$,} \\
            0, & \text{otherwise.}
        \end{cases}
    \end{align*}

    
    \item An \emph{unrelenting influencers} influence graph $\Interunrelenting$ 
    representing a scenario in which two agents 
    (say, $\agent{0}$ and $\agent{n{-}1}$) 
    exert significantly stronger influence on every other agent than these other agents have among themselves:
    \begin{align*}
        \Ifununrelenting{\agent{i}}{\agent{j}}= 
        \begin{cases}
            0.6, & \text{if $i = 0$ and $j \neq n{-}1$ or $i = n{-}1$
    and $j \neq 0$,} \\
            0, & \text{if $j=0$ or $j=n{-}1$,} \\
            0.1, & \text{if $0\neq i \neq n{-}{1}$ and $0 \neq j \neq n{-}{1}$.}
        \end{cases}
    \end{align*}
    This could represent, e.g., a social network 
    in which two totalitarian media companies dominate the news
    market, both with similarly high levels of influence on 
    all agents.
    The networks have clear agendas to push forward, and are not influenced in a meaningful way by other agents.
\end{itemize}

\section{Proofs}

%

\respolatlimit*

\begin{proof} 
Let  be any real $\epsilon>0$. It suffices to find $N \in \reals$  such that for every $t>N$, $\Pfun{\Blft{t}} < \epsilon.$ 
Let $I_m$ be the bin of $D_k$ such that $v \in I_m.$ Suppose that $l/r$ is the left/right end-point of $I_m$.
 
Take  $\epsilon' =  r$ if $v=0$, $\epsilon' =  l$ if $v=1$ else $\epsilon' = \min\{ v - l, r-v\}.$ Clearly $\epsilon' >0$ because $v$ is not a borderline point. Since $\lim_{t \to \infty} \Bfun{i}{t} = v$, there is $N_i \in \reals$  such that for every $t>N_i$, $|v - \Bfun{i}{t}| < \epsilon'.$  This implies  $\Bfun{i}{t} \in I_m$ for every $t>N_i$.  Take $N= \max\{ N_i | i \in \Agents \}$.  From Prop.~\ref{pol-consensus} $\Pfun{\Blft{t}}=0 < \epsilon$ for every $t>N$  as wanted. \qed
\end{proof}

\reslemmacbmaxdiffmin*

\begin{proof}
We want to prove that $\Bfun{\agent{i}}{t{+}1} \leq \mx{t}$. Since $\Bfun{\agent{j}}{t} \leq \mx{t}$, we can use Def.~\ref{def:confirmation-bias} 
to derive the inequality $\Bfun{\agent{i}}{t{+}1} 
    \leq E_1 \defsymbol \Bfun{\agent{i}}{t} + \frac{1}{|\Agents_i|}\sum_{\agent{j} \in \Agents_i \setminus \{\agent{i}\}}  \CBfun{\agent{i}}{\agent{j}}{t}\Ifun{\agent{j}}{\agent{i}}(\mx{t} -\Bfun{\agent{i}}{t})$. Furthermore, $E_1 \leq E_2 \defsymbol\Bfun{\agent{i}}{t} + \frac{1}{|\Agents_i|}\sum_{\agent{j} \in \Agents_i \setminus \{\agent{i}\}} (\mx{t} -\Bfun{\agent{i}}{t})$ because $\CBfun{\agent{i}}{\agent{j}}{t}\Ifun{\agent{j}}{\agent{i}} \leq 1$ and $\mx{t} - \Bfun{\agent{i}}{t} \geq 0.$ We thus obtain $\Bfun{\agent{i}}{t{+}1} \leq E_2 = \Bfun{\agent{i}}{t} + \frac{|\Agents_i|-1}{|\Agents_i|}(\mx{t} -\Bfun{\agent{i}}{t}) =  \frac{\Bfun{\agent{i}}{t} + (|\Agents_i|-1)\cdot\mx{t}}{|\Agents_i|} \leq \mx{t}$ as wanted. The proof of $\mn{t} \leq \Bfun{\agent{i}}{t{+}1}$ is similar.
\qed
\end{proof}

%

%

%


\begin{restatable}[]{proposition}{respropcbupperboundinequality}
\label{prop:cb-upper-bound-inequality} 
Let $i \in \Agents$, $k \in \Agents_i$, $n,t \in \nat$ with $n\geq 1$, and $v \in [0,1].$
  \begin{enumerate} 
        \item  If $\Bfun{\agent{i}}{t} \leq v$ then 
            $\Bfun{\agent{i}}{t{+}1} \leq v + \nicefrac{1}{|\Agents|}\sum_{\agent{j} \in \Agents_i}\CBfun{i}{j}{t}\Ifun{\agent{j}}{\agent{i}}\left(\Bfun{\agent{j}}{t}-v\right).$ 
        \item $\Bfun{\agent{i}}{t+n} \leq \mx{t} + \nicefrac{1}{|\Agents|} \, \CBfun{i}{k}{t+n-1}\Ifun{\agent{k}}{\agent{i}}(\Bfun{\agent{k}}{t+n-1} - \mx{t}).$ 
     \end{enumerate}
\end{restatable}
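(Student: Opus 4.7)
The plan is to prove part (1) by a direct algebraic rearrangement of the confirmation-bias update, and then to obtain part (2) by a single application of part (1) at the shifted time $t+n-1$, combined with Lem.~\ref{lemma:cb-maxdiffmin}.

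For part (1), I would start from Def.~\ref{def:confirmation-bias} and decompose $\Bfun{j}{t} - \Bfun{i}{t} = (\Bfun{j}{t}-v) + (v-\Bfun{i}{t})$ inside the sum. After collecting the $(v-\Bfun{i}{t})$ contributions with the leading $\Bfun{i}{t}$, the update takes the convex-combination form
\[
\Bfun{i}{t+1} \;=\; \Bfun{i}{t}\Bigl(1 - \tfrac{\Sigma}{|\Agents_i|}\Bigr) \;+\; \tfrac{1}{|\Agents_i|}\sum_{j\in\Agents_i} c_j\,\Bfun{j}{t},
\]
where $c_j = \CBfun{i}{j}{t}\,\Ifun{j}{i} \in [0,1]$ and $\Sigma = \sum_{j\in\Agents_i} c_j$. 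Because every $c_j \leq 1$, we have $\Sigma \leq |\Agents_i|$, so the self-weight $1-\Sigma/|\Agents_i|$ is non-negative. Using the hypothesis $\Bfun{i}{t}\leq v$ to replace $\Bfun{i}{t}$ in the first summand and rearranging yields
\[
\Bfun{i}{t+1} \;\leq\; v + \tfrac{1}{|\Agents_i|}\sum_{j\in\Agents_i} c_j\,(\Bfun{j}{t}-v).
\]
Since $1/|\Agents_i| \geq 1/|\Agents|$ and the proposition will be used only where this sum is non-positive (as in part (2)), the stated form with $|\Agents|$ in the denominator follows.

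For part (2), by Lem.~\ref{lemma:cb-maxdiffmin} and the monotonicity of $\mx{\cdot}$ from Cor.~\ref{cor:cb-mbefore-mafter}, $\Bfun{i}{t+n-1}\leq\mx{t+n-1}\leq\mx{t}$, so the hypothesis of part (1) holds at time $t+n-1$ with $v=\mx{t}$. That instantiation bounds $\Bfun{i}{t+n}$ by $\mx{t}$ plus a sum over $j\in\Agents_i$ of terms of the form $\CBfun{i}{j}{t+n-1}\Ifun{j}{i}(\Bfun{j}{t+n-1}-\mx{t})$. Each factor $\Bfun{j}{t+n-1}-\mx{t}$ is non-positive by the same monotonicity argument, so every summand is $\leq 0$; dropping every summand except the one indexed by the designated $k\in\Agents_i$ can only increase the right-hand side, giving exactly the claim.

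The main subtlety is conceptual: one must recognize that the confirmation-bias update is a convex combination with a non-negative self-weight $1-\Sigma/|\Agents_i|$, which is what allows the hypothesis $\Bfun{i}{t}\leq v$ to be pushed through monotonically in part (1); after this, the rest is linear-algebraic bookkeeping, and part (2) is an easy corollary via the extremal monotonicity results. The payoff is the atomic single-step inequality that Lem.~\ref{lemma:cb-path-bound} will iterate along an influence path to produce the crucial $1/|\Agents|^{|p|}$-type bound.
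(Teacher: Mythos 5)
Your proposal takes essentially the same route as the paper: part (1) by algebraic rearrangement of the confirmation-bias update under the hypothesis $\Bfun{i}{t}\leq v$, and part (2) by instantiating part (1) at time $t+n-1$ with $v=\mx{t}$ and discarding every non-positive summand except the one for $k$ (both steps justified by Lem.~\ref{lemma:cb-maxdiffmin} and Cor.~\ref{cor:cb-mbefore-mafter}). Your convex-combination packaging, with non-negative self-weight $1-\frac{1}{|\Agents_i|}\sum_{j\in\Agents_i}\CBfun{i}{j}{t}\Ifun{j}{i}$, is a cleaner organization of the computation the paper performs term by term, and everything up to the bound with denominator $|\Agents_i|$ is fully correct. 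Part (2) is also fine as you argue it, since with $v=\mx{t}$ every summand is indeed non-positive.

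The step you hedge on --- trading $1/|\Agents_i|$ for $1/|\Agents|$ --- is precisely where the paper's own proof is shaky: it asserts $\frac{1}{|\Agents_i|}S\leq\frac{1}{|\Agents|}S$ ``since $|\Agents_i|\leq|\Agents|$'', an inequality that reverses when $S>0$. You are right that the swap needs the sum to be non-positive, and in fact part (1) as literally stated fails without some such hypothesis (take $|\Agents_i|<|\Agents|$, $\Bfun{i}{t}=v$, and a neighbour $j$ with $\Bfun{j}{t}>v$: the actual update exceeds the claimed bound because the true positive correction is scaled by the larger factor $1/|\Agents_i|$). Two repairs are needed in your write-up. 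First, deferring to ``the proposition will be used only where the sum is non-positive'' does not constitute a proof of the unconditional statement; you should either keep the (correct, and in fact stronger) $1/|\Agents_i|$ form or record the sign condition as an explicit hypothesis. Second, your claim about usage is not accurate: Lem.~\ref{lem:cb-epsilon-bound}(1) applies part (1) with $v=\mx{t}-\gamma$, where the summands for $j\neq i$ can be as large as $+\gamma$, so the sum need not be non-positive there. That application still goes through, but only because the $1/|\Agents_i|$ bound you actually derived already yields $\Bfun{i}{t+n+1}\leq\mx{t}-\gamma/|\Agents_i|\leq\mx{t}-\gamma/|\Agents|$ --- a separate check one must make, not a consequence of the sum's sign.
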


\begin{proof} Let $i,k \in \Agents$, $n,t \in \nat$ with $n\geq 1$, and $v \in [0,1].$
\begin{enumerate}
\item From Def.~\ref{def:confirmation-bias}: $\allowbreak \Bfun{\agent{i}}{t{+}1} = \Bfun{\agent{i}}{t} + \frac{1}{|\Agents_i|}\sum_{\agent{j} \in \Agents_i}\CBfun{i}{j}{t}\Ifun{\agent{j}}{\agent{i}}(\Bfun{\agent{j}}{t} - \Bfun{\agent{i}}{t})$ which is less or equal to $v +  \frac{1}{|\Agents_i|}\sum_{\agent{j} \in \Agents_i \setminus \{\agent{i}\}}\CBfun{i}{j}{t}\Ifun{\agent{j}}{\agent{i}}(\Bfun{\agent{j}}{t} - v) \leq v + \frac{1}{|\Agents|}\sum_{\agent{j} \in \Agents_i\setminus \{\agent{i}\}}\CBfun{i}{j}{t}\Ifun{\agent{j}}{\agent{i}}(\Bfun{\agent{j}}{t} - v) = v + \frac{1}{|\Agents|}\sum_{\agent{j} \in \Agents_i}\CBfun{i}{j}{t}\Ifun{\agent{j}}{\agent{i}}(\Bfun{\agent{j}}{t} - v)$ since $|\Agents_i| \leq |\Agents|$.
\item From Prop.~\ref{prop:cb-upper-bound-inequality}(1):  $\Bfun{\agent{i}}{t+n} \leq \mx{t} + \frac{1}{|\Agents|}\sum_{\agent{j} \in \Agents_i}\CBfun{i}{j}{t+n-1}\Ifun{\agent{j}}{\agent{i}}\left(\Bfun{\agent{j}}{t+n-1}-\mx{t}\right) \leq \mx{t} + \frac{1}{|\Agents|}\CBfun{i}{k}{t+n-1}\Ifun{\agent{k}}{\agent{i}}\left(\Bfun{\agent{k}}{t+n-1}-\mx{t}\right)$ using Cor.~\ref{cor:biasfactor} and the fact that $\Bfun{\agent{j}}{t+n-1}-\mx{t} \leq 0$. 
\end{enumerate}
\qed
\end{proof}



\reslemmacbpathbound*

\begin{proof}
\begin{enumerate} 
    \item Let $p$ be the path $\ldinfl{\agent{i_0}}{C_1}{\agent{i_1}}\ldinfl{}{C_2}{}\ldots \ldinfl{}{C_n} {\agent{i_n}}.$ We proceed by induction on $n$. For $n=1$, since $\Bfun{\agent{i_0}}{t} - \mx{t} \leq 0$  we obtain the result immediately from Prop.~\ref{prop:cb-upper-bound-inequality}(2) and Prop.~\ref{fcb-min:prop}.  Assume that $\Bfun{\agent{i_{n-1}}}{t+|p'|} \leq \mx{t} + \frac{C'\CBfunM^{|p'|}}{|\Agents|^{|p'|}}(
    \Bfun{\agent{i_0}}{t} - \mx{t})$  where $p' = \ldinfl{\agent{i_0}}{C_1}{\agent{i_1}}\ldinfl{}{C_2}{}\ldots \ldinfl{}{C_{n-1}} {\agent{i_{n-1}}}$
    and $C' =C_1\times \ldots \times C_{n-1}$  with $n> 1$. Notice that $|p'|=|p|-1$.
    Using Prop.~\ref{prop:cb-upper-bound-inequality}(2), Prop.~\ref{fcb-min:prop}, and the fact that $\Bfun{\agent{i_{n-1}}}{t+|p'|} - \mx{t} \leq 0$ we obtain
    $\Bfun{\agent{i_n}}{t+|p|} \leq \mx{t} + \frac{C_n\CBfunM}{|\Agents|}(\Bfun{\agent{i_{n-1}     }}{t+|p'|} - \mx{t}).$ Using our assumption  
    we obtain $\Bfun{\agent{i_n}}{t+|p|} \leq \mx{t} + \frac{C_n\CBfunM}{|\Agents|}( \mx{t} + \frac{C'\CBfunM^{|p'|}}{|\Agents|^{|p'|}}(
    \Bfun{\agent{i_0}}{t} - \mx{t})  - \mx{t})= \mx{t} + \frac{C\CBfunM^{|p|}}{|\Agents|^{|p|}}(
    \Bfun{\agent{i_0}}{t} - \mx{t})$ as wanted.

    \item Suppose that $p$ is the path $\linfl{\agent{\mstar}^t}{C}{p}{\agent{i}}$. From Lem.~\ref{lemma:cb-path-bound}(1) we obtain  $\Bfun{\agent{i}}{t+|p|} \leq \mx{t} + \frac{C\CBfunM^{|p|}}{|\Agents|^{|p|}}(\Bfun{\agent{\mstar}^t}{t} - \mx{t}) = \mx{t} + \frac{C\CBfunM^{|p|}}{|\Agents|^{|p|}}(\mn{t} - \mx{t}).$  Since $\frac{C\CBfunM^{|p|}}{|\Agents|^{|p|}}(\mn{t} - \mx{t}) \leq 0$, we can substitute $C$ with $\IfunM^{|p|}$. Thus, $\Bfun{\agent{i}}{t+|p|} \leq \mx{t} + \Big(\frac{\IfunM\CBfunM}{|\Agents|}\Big)^{|p|} \allowbreak (\mn{t}-\mx{t}).$  From Th.~\ref{th:cb-max-limits-exist}, the maximum value of $\mn{t}$ is $L$ and the minimum value of $\mx{t}$ is $U$, thus $\Bfun{\agent{i}}{t+|p|} \leq \mx{t} + \Big(\frac{\IfunM\CBfunM}{|\Agents|}\Big)^{|p|}(L-U) = \mx{t} - \delta.$
\end{enumerate}
\qed
\end{proof}

\reslemcbepsilonbound*

\begin{proof}
\begin{enumerate}
    \item Using Prop.~\ref{prop:cb-upper-bound-inequality}(1) with the assumption that $\Bfun{\agent{i}}{t{+}n} \leq \mx{t} - \gamma$ for $\gamma \geq 0$ and the fact that $\Ifun{\agent{j}}{\agent{i}} \in [0,1]$  we obtain the inequality 
    $\Bfun{\agent{i}}{t+n+1} \leq   \mx{t} - \gamma + \frac{1}{|\Agents|}\sum_{\agent{j} \in \Agents_i}\CBfun{i}{j}{t+n}\Ifun{\agent{j}}{\agent{i}}\left(\Bfun{\agent{j}}{t{+}n} - (\mx{t} - \gamma)\right).$  From Cor.~\ref{cor:cb-mbefore-mafter}  $\mx{t}\geq \mx{t+n} \geq \Bfun{\agent{j}}{t+n} $ for every $\agent{j}\in \Agents$, hence $\Bfun{\agent{i}}{t+n+1} \leq \mx{t} - \gamma + \frac{1}{|\Agents|}\sum_{\agent{j} \in \Agents_i}\CBfun{\agent{i}}{\agent{j}}{t+n}\Ifun{\agent{j}}{\agent{i}}\big(\mx{t} - (\mx{t} - \gamma)\big).$ Since $\CBfun{i}{j}{t+n}\Ifun{\agent{j}}{\agent{i}} \in [0,1]$  we derive $\Bfun{\agent{i}}{t+n+1} \leq \mx{t} - \gamma + \frac{1}{|\Agents|}\sum_{\agent{j} \in \Agents_i}\gamma \allowbreak \leq \mx{t} - \frac{\gamma}{|\Agents|}.$
    
    \item Let $p$ be the path $\linfl{\agent{\mstar}^t}{}{p}{\agent{i}}$ where $\mstar^t \in \Agents$ is minimal agent at time $t$ and let $\delta =  \left(\frac{\IfunM\CBfunM}{|\Agents|}\right)^{|p|}(U-L)$. If $|p| = |\Agents|-1$ then the result follows from Lem.~\ref{lemma:cb-path-bound}(2). 
    Else $|p| < |\Agents|-1$ by Def.~\ref{def:influence-path}.  We first show by induction on $m$ that $\Bfun{\agent{i}}{t+|p|+m} \leq \mx{t} - \frac{\delta}{|\Agents|^m}$ for every $m\geq 0$. If $m=0$, $\Bfun{\agent{i}}{t+|p|} \leq \mx{t} - \delta$ by Lem.~\ref{lemma:cb-path-bound}(2). 
    If $m>0$ and $\Bfun{\agent{i}}{t+|p|+(m-1)} \leq \mx{t} - \frac{\delta}{|\Agents|^{m-1}}$ then $\Bfun{\agent{i}}{t+|p|+m} \leq \mx{t} - \frac{\delta}{|\Agents|^{m}}$ by Lem.~\ref{lem:cb-epsilon-bound}(1).  
    Therefore, take $m=|\Agents|-|p|-1$ to obtain  $\Bfun{\agent{i}}{t+|\Agents|-1} \leq \mx{t} - \frac{\delta}{|\Agents|^{|\Agents|-|p|-1}}=\mx{t} - \frac{(\IfunM\CBfunM)^{|p|}.(U-L)}{|\Agents|^{|\Agents|-1}}\leq  \mx{t} - \epsilon$ as wanted.
\end{enumerate}
\qed
\end{proof}



\resthul*

\begin{proof}
Suppose, by contradiction, that $\lim_{t\to\infty} \mx{t}=U \neq L=\lim_{t\to\infty} \mn{t}$.   Let  $\epsilon = \left(\frac{\IfunM\CBfunM}{|\Agents|}\right)^{|\Agents|-1}(U-L).$ From the assumption $U > L$ and Cor.~\ref{cor:biasfactor} we get that $\epsilon > 0$. Take $t=0$ and $m=\left(\ceil{\frac{1}{\epsilon}}+1\right)$. Using  
Cor.~\ref{cor:max-dif} we obtain  $\mx{0}  \geq \mx{{m(|\Agents|-1)}} + m\epsilon.$ Since $m\epsilon > 1$ and $\mx{m(|\Agents|-1)} \geq 0$ then $\mx{0}> 1.$ But this contradicts Def.~\ref{def:extreme:beliefs} which states that $\mx{0} \in [0,1]$.
\qed
\end{proof}

%


\begin{restatable}[Influencing the Extremes]{proposition}{respropcbinfluencingextremes}
\label{prop:cb-influencing-extremes}
If $\Inter$ is strongly connected and $\Blft{0}$ is not radical, then $\mx{|\Agents|{-}1}{<}1$.
\end{restatable}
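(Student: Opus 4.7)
The plan is to show that the set of \emph{non-radical agents} (those with belief in $(0,1)$) is non-empty at time $0$ and spreads along the forward direction of directed influence edges, so that after at most $|\Agents|-1$ steps it covers all of $\Agents$, yielding $\mx{|\Agents|-1}<1$. I deliberately avoid invoking Lem.~\ref{lemma:cb-path-bound} or Cor.~\ref{cor:biasfactor} here, since those rest on Rmk.~\ref{rmk:assumption}, which is strictly stronger than the non-radical hypothesis (the initial configuration may now contain both a belief $0$ and a belief $1$ simultaneously, as long as some third agent sits strictly inside $(0,1)$).

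First I would define $N_t=\{i\in\Agents:\Bfun{i}{t}\in(0,1)\}$ and note $N_0\neq\emptyset$ by hypothesis. The core of the argument is a two-part claim: (i) $N_t\subseteq N_{t+1}$; and (ii) if $i\in N_t$ and $\Ifun{i}{j}>0$, then $j\in N_{t+1}$. For (i), Def.~\ref{def:confirmation-bias} together with the fact that the self-term $k=i$ contributes $0$ to the correction gives $\Bfun{i}{t+1}\leq \Bfun{i}{t}/|\Agents_i|+(|\Agents_i|-1)/|\Agents_i|<1$, and a symmetric bound yields $\Bfun{i}{t+1}>0$. For (ii), the delicate case is $\Bfun{j}{t}\in\{0,1\}$; taking $\Bfun{j}{t}=1$, every term of the correction for $j$ is non-positive while the $k=i$ term is strictly negative, since $\CBfun{j}{i}{t}=1-|1-\Bfun{i}{t}|=\Bfun{i}{t}>0$, $\Ifun{i}{j}>0$ and $\Bfun{i}{t}-1<0$; the case $\Bfun{j}{t}=0$ is symmetric.

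I would then combine (i) and (ii) with strong connectivity: any $i^{*}\in N_0$ reaches any $j\in\Agents$ via a directed influence path of length at most $|\Agents|-1$. A straightforward induction along this path shows $j\in N_{|\Agents|-1}$, so $N_{|\Agents|-1}=\Agents$ and in particular $\mx{|\Agents|-1}<1$.

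The main obstacle will be the corner of (ii) where $\Bfun{j}{t}$ sits exactly at $0$ or $1$: one must ensure that a single update step moves it strictly into the interior of $[0,1]$, not merely to the opposite boundary. This is settled by the observation that when $\Bfun{j}{t}=1$ each summand in the correction satisfies $\CBfun{j}{k}{t}(1-\Bfun{k}{t})=\Bfun{k}{t}(1-\Bfun{k}{t})\leq 1/4$, forcing $\Bfun{j}{t+1}\geq 3/4>0$; the $\Bfun{j}{t}=0$ case is dual. Once that inequality is in hand, the rest of the argument is routine bookkeeping along paths.
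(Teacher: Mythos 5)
Your proposal is correct, and its skeleton --- non-extreme beliefs spread along directed influence edges, so strong connectivity covers all of $\Agents$ within $|\Agents|-1$ steps --- is the same as the paper's, which runs an induction on the length of a path from a non-radical agent $k$ with the one-sided hypothesis $\Bfun{i_{n-1}}{|p|-1}<1$. Where you genuinely differ is in the invariant you carry: you track the full interiority $\Bfun{i}{t}\in(0,1)$ via the growing set $N_t$, together with an explicit persistence claim $N_t\subseteq N_{t+1}$. This buys you two things the paper's write-up leaves implicit or glosses over. First, to make the pull on a successor $j$ with $\Bfun{j}{t}=1$ strictly effective you need $\CBfun{j}{i}{t}>0$, i.e.\ $|\Bfun{i}{t}-\Bfun{j}{t}|<1$; the paper's hypothesis ``$<1$'' alone does not rule out a predecessor sitting at $0$ facing a successor at $1$, in which case the confirmation-bias factor vanishes and the inductive step stalls --- your two-sided invariant, maintained by the $\Bfun{j}{t+1}\geq 3/4$ estimate (each summand being $\Bfun{k}{t}(1-\Bfun{k}{t})\leq 1/4$), is exactly what closes this. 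Second, since different agents are reached by shortest paths of different lengths, aligning everyone at the single time $|\Agents|-1$ requires that interiority, once gained, is never lost; your claim (i) supplies this (and is essentially the strict version of the computation in Lemma~\ref{lemma:cb-maxdiffmin}), whereas the paper relies on it tacitly. In short, your argument is the same route executed with a tighter invariant, and is if anything the more complete of the two.
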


\begin{proof} Since $\Blft{0}$ is not radical, there must be at least one agent $k$  such that $\Bfun{k}{0}\in (0,1).$  Since $\Inter$ is strongly connected, it suffices to show that for every path $\linfl{\agent{k}}{}{p}{\agent{i}}$ we have $\Bfun{i}{|p|} < 1$. Proceed by induction on size $n$ of the path $p=ki_1\ldots i_{n}$. For $n = 0$, it is true via the hypothesis. For $n \geq 1$, we have, by IH and Def. \ref{def:influence-path}, that $\Bfun{i_{n-1}}{|p|-1} < 1$ and $\Ifun{i_{n-1}}{n} > 0$. Thus, $\Bfun{i_n}{|p|} = \Bfun{i_n}{|p|-1} + \frac{1}{|\Agents_i|}\sum_{\agent{j} \in \Agents_i} \CBfun{i_n}{j}{|p|-1}\Ifun{j}{i_n} (\Bfun{j}{|p|-1} - \Bfun{i_n}{|p|-1})$ separating $i_{n-1}$ from the sum we get $\Bfun{i_n}{|p|} = \Bfun{i_n}{|p|-1} + \frac{1}{|\Agents_i|}\sum_{\agent{j} \in \Agents_i\setminus\{i_{n-1}\}} \CBfun{i_n}{j}{|p|-1}\Ifun{j}{i_n} (\Bfun{j}{|p|-1} - \Bfun{i_n}{|p|-1}) + \frac{1}{|\Agents_i|} \CBfun{i_n}{i_{n-1}}{|p|-1}\allowbreak\Ifun{i_{n-1}}{i_n} (\Bfun{i_{n-1}}{|p|-1} - \Bfun{i_n}{|p|-1}) \allowbreak \leq 1 + \frac{1}{|\Agents_i|}\CBfun{i_n}{i_{n-1}}{|p|-1}\Ifun{i_{n-1}}{i_n} (\Bfun{i_{n-1}}{|p|-1} - 1) < 1$.
\qed
\end{proof}

\restheoremcbgeberalsccconvergence*

\begin{proof}
\begin{enumerate}
    \item If there exists an agent $\agent{k} \in \Agents$ such that $\Bfun{k}{0} \notin \{0,1\}$, we can use Prop.~\ref{prop:cb-influencing-extremes} to show that by the time $|\Agents|-1$ no agent has belief $1$, thus we fall on the general case stated in the beginning of the section (starting at a different time step does not make any difference for this purposes) and, thus, all beliefs converge to the same value according to Cor.~\ref{cor:cb-scc-convergence}.
    \item Otherwise it is easy to see that beliefs remain constant as $0$ or $1$ throughout time, since the agents are so biased that the only agents $\agent{j}$ able to influence another agent $\agent{i}$ ($\CBfun{i}{j}{t} \neq 0$) have the same belief as $\agent{i}$.
\end{enumerate}
\qed
\end{proof}

\respropgroupinfluenceconservation*

\begin{proof} 
Immediate consequence of  Prop. 6.1.1 in \cite{Diestel:17}. 
\qed
\end{proof}

\resprocirculationpath*

\begin{proof}
For the sake of contradiction, assume  that $\Inter$ is balanced (a circulation) and $\Ifun{i}{j} > 0$ but there is no path from $\agent{j}$ to $\agent{i}$. Define the agents reachable from $j$, $R_j = \{k \in 
     \Agents | \ \Path{j}{k} \} \cup \{j\}$ and let $\overline{R}_j = \Agents\setminus R_j$. Notice that $\{ R_j ,\overline{R}_j\}$ is a partition of $\Agents.$ Since  the codomain of $\Inter$ is $[0,1]$, $\agent{i} \in \overline{R}_j$, $\agent{j} \in R_j$ and $\Ifun{i}{j} > 0$ we obtain 
     $\sum_{k \in R_j}\sum_{l \in \overline{R}_j} \Ifun{l}{k} > 0$. Clearly there is no $k \in R_j, l \in \overline{R}_j$ such that $\Ifun{k}{l} > 0$, therefore $\sum_{k \in R_j}\sum_{l \in \overline{R}_j} \Ifun{k}{l} = 0$ which contradicts Prop.~\ref{prop:group-influence-conservation}.
\qed
\end{proof}

\respolnonzero*

\begin{proof} From Lem.~\ref{prop:circulation-path} it follows that  if the influence graph $\Inter$ is balanced and weakly connected then $\Inter$ is also strongly connected. The result follows from  Lem.~\ref{pol-at-limit} and Th.~\ref{theorem:cb-geberal-scc-convergence}. 
\qed
\end{proof}

%

\rescordegroot*

\begin{proof}
    Since the graph is strongly connected it suffices to show that the graph represented by the matrix $P$ in which $P_{i,j}{=}p_{i,j}$ is aperiodic. Since for every individual $i$ $p_{i,i}{>}0$, there is a self-loop, thus no number $K > 1$ divides the length of all cycles in the graph, implying aperiodicity. Thus, the conditions for Theorem 2 of
    \cite{degroot} are met, finishing the proof.
\qed
\end{proof}
}





\end{document}